\providecommand{\tabularnewline}{\\}
\theoremstyle{plain}
\newtheorem{thm}{\protect\theoremname}
\theoremstyle{plain}
\newtheorem{prop}[thm]{\protect\propositionname}
\theoremstyle{plain}
\newtheorem{cor}[thm]{\protect\corollaryname}
\theoremstyle{plain}
\newtheorem{lem}[thm]{\protect\lemmaname}
\theoremstyle{definition}
\newtheorem{example}[thm]{\protect\examplename}
\theoremstyle{definition}
\newtheorem{defn}[thm]{\protect\definitionname}
  \providecommand{\corollaryname}{Corollary}
  \providecommand{\examplename}{Example}
  \providecommand{\lemmaname}{Lemma}
  \providecommand{\propositionname}{Proposition}
  \providecommand{\theoremname}{Theorem}
  \providecommand{\definitionname}{Definition}
\newcommand{\Div}{\operatorname{div}}
\newcommand{\Diff}{\operatorname{Diff}}
\newcommand{\res}{\operatorname{res}}
\newcommand{\supp}{\operatorname{supp}}
\begin{document}

%
\title{Multi-point Codes from Generalized Hermitian Curves}
%
%
%

\author{Chuangqiang~Hu and~Chang-An~Zhao    
	
\thanks{C. Hu and C.-A. Zhao are with the School
	of Mathematics and Computational Science, School of Mathematics and Computational Science, Sun Yat-sen University, Guangzhou 510275, P.R.China.\protect\\
	\protect\\
	E-mail: huchq@mail2.sysu.edu.cn,~{zhaochan3@mail.sysu.edu.cn}}
\thanks{Manuscript received *********; revised ********.}
} 

\maketitle

\begin{abstract}
We investigate multi-point algebraic geometric codes defined from curves related to the generalized Hermitian curve introduced by Alp Bassa, Peter Beelen, Arnaldo Garcia, and Henning Stichtenoth. Our main result is to find a basis of the Riemann-Roch space of a series of divisors, which can be used to construct multi-point codes explicitly. These codes
turn out to have nice properties similar to those of Hermitian codes, for example,  they are easy to describe, to encode and decode.  It is shown that the duals are also such codes and an explicit formula is given. In particular, this formula enables one to calculate the parameters of these codes. Finally, we apply our results to obtain linear codes attaining  new records on the parameters. A new record-giving
$ [234,141,\geqslant 59] $-code over $ \mathbb{F}_{27} $ is presented as one of the examples.
\end{abstract}

\begin{IEEEkeywords}
algebraic geometric codes, Hermitian codes, order bound.
\end{IEEEkeywords}

%
\IEEEpeerreviewmaketitle

\section{Introduction}
%
%
%
%
\IEEEPARstart{G}{oppa} constructed error-correcting linear codes by using tools from Algebraic Geometry: a non-singular, projective, geometrically irreducible, algebraic curve $\mathcal{X}$ of genus $g$ defined over $\mathbb{F}_{q}$, the finite field with $q$ elements, and two rational divisors $D$ and $G$ on $\mathcal{X}$. These divisors are chosen in such a way that they have disjoint supports and $D$ equals to a sum of pairwise distinct rational places, $D=P_{1}+\ldots+P_{n}$. The algebraic geometric code is defined as
\[C(\mathcal{X},D,G):={\left\{ (f(P_{1}), f(P_{2}),\ldots, f(P_{n})):f\in\mathcal{L}(G)\right\} },\]
where $\mathcal{L}(G)$ denotes the Riemann-Roch space associated to $ G $, see \cite{Stichtenoth} as general references for all facts concerning algebraic geometric codes.

One of the main features of Goppa's construction is that the minimum distance is
bounded from below, whereas in general there is no lower bound available on the minimum
distance of a code. The parameters of an algebraic geometric code are strictly  dependent on the curve chosen in the construction. If the curve possesses additional nice properties, one can hope that the corresponding algebraic geometric code also has nice properties. Algebraic geometric codes would advantage one to give an asymptotically good sequence of codes with parameters better than
the Varshamov-Gilbert bound in a certain range of the rate and for large enough
alphabets.

 The most studied codes are probably those arising from the Hermitian curve \cite{Stichtenoth,Tiersma}. The advantage
 of these codes is that the codes are easy to describe and to encode and decode. Moreover, these codes often have excellent parameters.

One-point codes on Hermitian curves were well-studied in
the literature, and efficient methods to decode them were known  \cite{Stichtenoth,Guruswami,Yang,Yang2}. The minimum distance of Hermitian two-point codes had been first
determined by M.~Homma and S.~J.~Kim  \cite{Homma,Homma2,Homma3,Homma4}. The explicit
formulas for the dual minimum distance of such codes were given by S.~Park  \cite{Park}. More recently,
Hermitian codes from higher-degree places had been considered in \cite{Korchmaros}. The dual minimum distance of many three-point codes from Hermitian curves was computed in \cite{Ballico}, by extending a recent and powerful approach by A.~Couvreur \cite{Couvreur}. S.~Bulygin investigated one-point codes from the generalized Hermitian curves proposed by Garcia and Stichtenoth; and  calculated some parameters of these codes \cite{Bulygin}. Some generalizations of these codes were studied by C.~Munuera, A.~Sep\'{u}lveda, and F.~Torres \cite{Munuera}.

In this paper we investigate multi-point codes from the other generalized Hermitian curves. Let $ q $ be a prime power, $ \mathbb{F}_{q^ {n_0}} $ be the finite field of order $ q^{n_0} $, with $ n_0 \geqslant 2 $, and $j_0,  k_0$ be two relatively prime positive numbers such that $ j_0+ k_0 =n_0 $. We are interested in algebraic geometric codes obtained from the non-singular model $\mathcal{X}$ over $ \mathbb{F}_{q^ {n_0}} $ of the plane curve
\begin{equation}
 \frac{y^{q^{j_0}}}{x}+\frac{y^{q^{j_0+1}}}{x^{q}}+\ldots+\frac{y^{n_0-1}}{x^{q^{k_0-1}}}+\frac{y^{1}}{x^{q^{k_0}}}+\frac{y^{q}}{x^{q^{k_0+1}}}+\ldots+\frac{y^{q^{j_0-1}}}{x^{q^{n_0}}}=1 \label{eq:the_plane_curve},
\end{equation}
which was introduced by Alp Bassa, Peter Beelen, Arnaldo Garcia, and Henning Stichtenoth \cite{Bassa}. Using Equation (\ref{eq:the_plane_curve}), they constructed towers of curves of large genera over $ \mathbb{F}_{q^{n_0}} $ which are optimal in the sense that they asymptotically attain the Drinfeld-Vladut bound. For $ n_0=2, j_0=k_0 =1 $ this is exactly the Hermitian curve over $ \mathbb{F}_{q^2} $. To see this,  we replace $ xy $ by $ z $ in Equation (\ref{eq:the_plane_curve}); then
\begin{equation}
z+z^q=x^{q+1},
\end{equation}
which is the usual definition of Hermitian curve \cite{Stichtenoth}. Our consideration is the special case with $ n_0 =3, j_0=1, k_0=2 $. According to the paper \cite{Bassa}, we introduce four divisors as follows:
\begin{enumerate}
		\item $D: = \sum_{\alpha, \beta } {D_{\alpha, \beta }} $, where $D_{\alpha,\beta}:=(x=\alpha, y=\beta)$ with $ \alpha , 
		\beta \in \mathbb{F}_ {q^3}^{*}$ satisfies Equation (\ref{eq:the_plane_curve});
		\item $V:=(x=0, y=\infty)$, the divisor consisting of all the places at infinity on the $y$-axis, which can be written as $V = \sum _ \mu V_{\mu}$, where $V_{\mu}:=(x=0,y=\infty,x^{q}y^{q+1}=\mu)$ represents a rational place in $ V $ and $\mu ^{q-1}=-1$ when $q$ is even;
		\item $Q:=(x=\infty, y= \infty )$, which contains a unique rational place just in case $ q $ is odd; 
		\item $P:=(x=0,y=0)$, the origin of the curve. 
\end{enumerate}
  The divisors $ D $, $ V $, $ Q $ and $ P $ contains all the rational places on the curve above. We define the algebraic geometric codes over   $ \mathbb{F}_{q^3} $ of even characteristic
	 \[
	C_{r}=C(\mathcal{X},D+P+V,rQ),\] 
	which are highly similar to the Hermitian codes. Also, we study the multi-point codes over   $ \mathbb{F}_{q^3} $ of arbitrary characteristic
	 \[
	 C_{r,s,t}=C(\mathcal{X},D,rQ+sP+tV).\]
	 To construct these codes, we find a basis for the Riemann-Roch space $ \mathscr L(rP+sQ+tV) $ by using Pick's theorem. We use Goppa bound and order bound to estimate the distances of these codes. It turns out that some such codes attain new record values on the parameters. By direct computation, a $[234,141,\geqslant 59]$-code over $ \mathbb F_{27} $ is presented as one of the examples.

	 The paper is organized as follows. In Section 2, we introduce some arithmetic properties of the curve (\ref{eq:the_plane_curve}). The properties of the codes $ C_r $ and $C_{r,s,t}$ are presented in Section 3 and Section 4 respectively. 

\section{the arithmetic properties of the curve}

We follow the notations in Section 1. Let $ q  $ be a power of a prime  and $ \mathbb{F}_{q^3} $ be a finite field of cardinality $ q^3  $. In this section we study the curve $ \mathcal{X} $ over $\mathbb{F}_{q^3} $
\begin{equation}
\frac{y^{q}}{x}+\frac{y^{q^{2}}}{x^{q}}+\frac{y}{x^{q^{2}}}=1\label{eq:curve_q=00003D3}.
\end{equation}
By the transformation $(x,y)\mapsto(1/y,1/x)$, we obtain 
\[
\frac{y^{q^{2}}}{x}+\frac{y}{x^{q}}+\frac{y^{q}}{x^{q^{2}}}=1,
\]
which is exactly Equation (\ref{eq:the_plane_curve}) with $j_0=2,k_0=1$.

Let $P:=(x=0,y=0)$, $Q:=(x=\infty,y=\infty)$, $V:=(x=0,y=\infty)$
be the divisors of the curve $ \mathcal{X} $ over $\mathbb{F}_{q^{3}}$.
Actually, $P$ is a rational place.
\begin{prop}
	\label{prop:valuation}
	\begin{enumerate}
		\item	The curve $ \mathcal{X} $ has genus $g=(q^{4}-3q+2)/2$.
		\item $\Div( x)=\Div_{0}(x)-\Div_{\infty}(x)=P+(q+1)V-qQ$, and $\Div( y)=\Div_{0}(y)-\Div_{\infty}(y)=q^{2}P-qV-Q$.
		\item $\deg(P)=1,\deg(Q)=q,\deg(V)=q-1$.
		\item $v_{P}(x^{-q^{2}}y)=0$, and $x^{-q^{2}}y \equiv 1 \mod P$. 
		\item If $ V_{\mu} $ is a rational place in V, then  $v_{V_{\mu}}(x^{q}y^{q+1})=0$, and $x^{q}y^{q+1}  \equiv \mu \mod V_{\mu}$,
			where $\mu^{q-1}= -1$.
		\item If $Q_{\delta}$ is a rational place in $ Q $, then  $v_{Q_{\delta}}(x^{-1}y^{q})=0$, and $x^{-1}y^{q}\equiv \delta \mod Q_{\delta}$,
			where $\delta+\delta^{q}=1$.
	\end{enumerate}
\end{prop}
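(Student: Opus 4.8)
The plan is to view $F=\mathbb F_{q^{3}}(x,y)$ as a degree-$q^{2}$ extension of the rational subfield $\mathbb F_{q^{3}}(y)$. Clearing denominators in~(\ref{eq:curve_q=00003D3}) yields the equation
\[
g(x):=x^{q^{2}}-y^{q^{2}}x^{q^{2}-q}-y^{q}x^{q^{2}-1}-y=0,
\]
which is monic in $x$ and Eisenstein at the place $y=0$ of $\mathbb F_{q^{3}}(y)$ (every non-leading coefficient lies in the ideal $(y)$ and the constant term $-y$ does not lie in $(y)^{2}$). Hence $g$ is irreducible, $[F:\mathbb F_{q^{3}}(y)]=q^{2}$, there is a single place $P$ over $y=0$, totally and wildly ramified with $v_{P}(y)=q^{2}$, $v_{P}(x)=1$, and $\deg P=1$; having a rational place, $F$ has $\mathbb F_{q^{3}}$ as its full constant field. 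In characteristic dividing $q$ only the term $-y^{q}x^{q^{2}-1}$ of $g$ has nonzero $x$-derivative, so $g'(x)=y^{q}x^{q^{2}-2}$ and likewise $\partial g/\partial y=-1$; for $\alpha\neq 0$ the reduction $g\bmod(y-\alpha)$ has derivative $\alpha^{q}X^{q^{2}-2}$ and nonvanishing constant term $-\alpha$, hence is separable, so $F/\mathbb F_{q^{3}}(y)$ is unramified at every $y=\alpha$ with $\alpha\neq0,\infty$. By Riemann--Hurwitz only the fibres over $y=0$ and $y=\infty$ matter.

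Next I would run a Newton polygon for $g$ over the completion at $y=\infty$; it has two segments. The one of slope $-(q+1)/q$ and horizontal length $q^{2}-q$ collects the places with $x=0$, that is the places in $V$; only the two endpoint terms of $g$ sit on this segment, so its residual polynomial is a binomial $T^{q-1}+c$ with $c\neq0$, which is separable — hence the segment yields $q-1$ places, each with ramification index exactly $q$ over $y=\infty$, whence $v_{V_{\mu}}(x)=q+1$, $v_{V_{\mu}}(y)=-q$ and $\deg V=q-1$ (for even $q$ the residual polynomial splits completely, giving $q-1$ rational places $V_{\mu}$). The segment of slope $q$ and length $q$ collects the places with $x=\infty$, that is the places in $Q$, and these are unramified over $y=\infty$, so $v_{Q_{\delta}}(x)=-q$, $v_{Q_{\delta}}(y)=-1$ and $\deg Q=q$. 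Reading off $\Div_{0}$ and $\Div_{\infty}$ of $x$ and of $y$ from these valuations proves (2) and (3), and in particular $\deg P=1$.

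Parts (4)--(6) are then direct substitutions into~(\ref{eq:curve_q=00003D3}). At $P$ the first two summands have positive valuation and the third is the unit $x^{-q^{2}}y$, so $x^{-q^{2}}y\equiv1\pmod P$. Multiplying~(\ref{eq:curve_q=00003D3}) by $x^{q^{2}}/y$ gives
\[
y^{q-1}x^{q^{2}-1}+y^{q^{2}-1}x^{q^{2}-q}+1=x^{q^{2}}/y,
\]
in which, at $V_{\mu}$, every term except $y^{q^{2}-1}x^{q^{2}-q}=(x^{q}y^{q+1})^{q-1}$ and the constant $1$ has positive valuation, so $(x^{q}y^{q+1})^{q-1}\equiv-1$, i.e.\ $\mu^{q-1}=-1$ with $\mu:=\overline{x^{q}y^{q+1}}$. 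At $Q_{\delta}$ only the summand $y/x^{q^{2}}$ has positive valuation and $y^{q^{2}}/x^{q}=(y^{q}/x)^{q}$, so~(\ref{eq:curve_q=00003D3}) reduces to $\delta+\delta^{q}=1$ with $\delta:=\overline{x^{-1}y^{q}}$; that the three distinguished functions are units at the respective places is immediate from the valuations above.

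Finally the genus follows from $2g-2=-2q^{2}+\deg\mathfrak d$ for $F/\mathbb F_{q^{3}}(y)$, the different $\mathfrak d$ being supported on $P$ and the $V_{\mu}$. Both are wildly ramified, which is the crux; I would compute the different exponents with differentials. Implicit differentiation of $g=0$ gives $dy=y^{q}x^{q^{2}-2}\,dx$. Since $x$ is a prime element at $P$, $d(P)=v_{P}(dy)=q\cdot q^{2}+(q^{2}-2)=q^{3}+q^{2}-2$. Since $xy$ is a prime element at $V_{\mu}$, from $d(xy)=y\,(1+x^{q^{2}-1}y^{q-1})\,dx$ and $v_{V_{\mu}}(x^{q^{2}-1}y^{q-1})=q^{3}-1>0$ one gets $v_{V_{\mu}}(dx)=q$, and then $d(V_{\mu})=v_{V_{\mu}}(dy)+2q=q^{3}+q-2$ (the extra $2q$ coming from the double pole of $dy$ at $y=\infty$ and $e(V_{\mu}\mid y=\infty)=q$). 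As $\sum_{\mu}d(V_{\mu})\deg V_{\mu}=(q-1)(q^{3}+q-2)$ regardless of how $V$ splits, we get $\deg\mathfrak d=(q^{3}+q^{2}-2)+(q-1)(q^{3}+q-2)=q^{4}+2q^{2}-3q$, hence $2g-2=q^{4}-3q$ and $g=(q^{4}-3q+2)/2$. The main obstacle throughout is the wild ramification: justifying that the index along $V$ is exactly $q$ (the residual-polynomial step) and evaluating the two wild different exponents.
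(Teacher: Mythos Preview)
Your proof is correct. For parts (4)--(6) your argument is essentially identical to the paper's: reduce the defining equation modulo the relevant place after checking which summands have positive valuation. For parts (1)--(3), however, the paper simply cites \cite{Bassa}, whereas you supply a self-contained argument by viewing $F/\mathbb{F}_{q^{3}}(y)$ as a degree-$q^{2}$ extension: Eisenstein at $y=0$ gives the totally ramified place $P$ with $v_{P}(x)=1$, $v_{P}(y)=q^{2}$; the Newton polygon of $g(x)$ at $y=\infty$ separates the $V$-segment (slope $-(q+1)/q$, length $q^{2}-q$, residual binomial $T^{q-1}+c$) from the $Q$-segment (slope $q$, length $q$, separable residual polynomial), yielding the valuations and degrees in (2)--(3); and Riemann--Hurwitz together with the explicit wild different exponents $d(P)=q^{3}+q^{2}-2$ and $d(V_{\mu})=q^{3}+q-2$ (obtained from $dy=y^{q}x^{q^{2}-2}\,dx$) gives the genus. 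Your route buys a standalone proof independent of \cite{Bassa}; the paper's buys brevity.

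One small imprecision: asserting that the $V$-segment ``yields $q-1$ places, each with ramification index exactly $q$'' is only literally true when the residual binomial $T^{q-1}+c$ splits over $\mathbb{F}_{q^{3}}$ (as you correctly note happens for even $q$). In general the separable residual polynomial yields places whose residue degrees sum to $q-1$, each with $e=q$; this is all that is needed for $\deg V=q-1$ and for the different computation, since $\sum_{\mu} d(V_{\mu})\deg V_{\mu}=(q^{3}+q-2)(q-1)$ regardless of the splitting, so the conclusion stands.
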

\begin{proof}
	The assertions 1), 2), and 3) are shown in \cite{Bassa}. 
	
	4)The equation  $v_{P}(x^{-q^{2}}y)=0$ is clear by using assertion 2). It is easy to show that $v_{P}(\frac{y^{q}}{x})=q^{3}-1>0$, and $v_{P}(\frac{y^{q^{2}}}{x^{q}})=q^{4}-q>0$;
	then $1=\frac{y^{q}}{x}+\frac{y^{q^{2}}}{x^{q}}+\frac{y}{x^{q^{2}}}  \equiv \frac{y}{x^{q^{2}}} \mod P$.
	
	5)Multiplying both sides of Equation (\ref{eq:curve_q=00003D3}) by $x^{q}{}^{2}/y$, we get \[y^{q-1}x^{q^{2}-1}+(x^{q}y^{q+1})^{q-1}+1=\frac{x^{q^{2}}}{y}.\]
	Again by 2),  we have $v_{V_{\mu}}{(y^{q-1}x^{q^{2}-1})}=q^{3}-1>0$, and $v_{V_{\mu}}{(\frac{x^{q^{2}}}{y})}=q^{3}+q^{2}+q>0$.
	Hence, we find $(x^{q}y^{q+1})^{q-1}+1\equiv 0 \mod V_{\mu}$.
	
	6)Let $x^{-1}y^{q}\equiv \delta \mod Q_{\delta}$. Note that $v_{Q_{\delta}}(\frac{y}{x^{q^{2}}})=q^{3}-1>0$. This implies that $\delta+\delta^{q}=1$.
\end{proof}
We have described the rational places on the curve $\mathcal X$  in Section 1. The following corollary gives a simple explanation. 

\begin{cor}\label{cor:places}
	All the rational places on the curve are the following: $D_{\alpha,\beta}$,
		$P$, $Q_{\delta}$, if $q$ is odd; and $D_{\alpha,\beta}$,
		$P$,  $V_{\mu}$, if $q$ is even.
\end{cor}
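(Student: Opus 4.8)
The plan is to show that the four families $D_{\alpha,\beta}$, $P$, $Q_\delta$, $V_\mu$ exhaust the rational places, then use Proposition \ref{prop:valuation} to determine which of the non-affine families actually contribute rational places depending on the parity of $q$. First I would observe that a rational place $R$ of $\mathcal{X}$ lies over a rational place of the rational subfield $\mathbb{F}_{q^3}(x)$, namely over $x=\alpha$ for some $\alpha\in\mathbb{F}_{q^3}$ or over $x=\infty$. If $v_R(x)=0$ and $v_R(y)\geqslant 0$, then $R$ is an affine point, and plugging the residues $\alpha=x(R)$, $\beta=y(R)$ into Equation (\ref{eq:curve_q=00003D3}) shows that $(\alpha,\beta)$ satisfies the curve equation; moreover $\alpha\neq 0$ forces $\beta\neq 0$ (and conversely) since otherwise a term like $y/x^{q^2}$ or $y^q/x$ would have a pole, contradicting $v_R\geqslant 0$ of the left-hand side which equals $1$. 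Hence such $R$ is one of the $D_{\alpha,\beta}$ with $\alpha,\beta\in\mathbb{F}_{q^3}^{*}$, unless $\alpha=\beta=0$, which is the place $P$.

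Next I would handle the places with $v_R(x)\neq 0$ or $v_R(y)<0$, i.e.\ the places in the supports of $\Div(x)$ and $\Div(y)$. By assertion 2) of Proposition \ref{prop:valuation}, $\Div(x)=P+(q+1)V-qQ$ and $\Div(y)=q^2P-qV-Q$, so the only places outside the affine chart where $x,y$ are both finite and nonzero are those in $\supp(V)$ and $\supp(Q)$ (the place $P$ having already been accounted for). A place $R\in\supp(Q)$ satisfies $v_R(x)<0$ and $v_R(y)<0$; by assertion 6), $x^{-1}y^q$ has valuation $0$ at such a place and its residue $\delta$ satisfies $\delta+\delta^q=1$, so $R$ is one of the $Q_\delta$. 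Similarly a place $R\in\supp(V)$ has $v_R(x)>0$, $v_R(y)<0$; by assertion 5), $x^qy^{q+1}$ has valuation $0$ there with residue $\mu$ satisfying $\mu^{q-1}=-1$, so $R$ is one of the $V_\mu$. This shows every rational place is of one of the four types.

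Finally I would determine rationality of the places lying in $V$ and $Q$. The residue of $x^qy^{q+1}$ at a place of $V$ must satisfy $\mu^{q-1}=-1$; for this equation to have a solution $\mu\in\mathbb{F}_{q^3}^{*}$ we need $-1$ to be a $(q-1)$-th power, equivalently the order of $-1$ in $\mathbb{F}_{q^3}^{*}$ divides $(q^3-1)/\gcd(q-1,q^3-1)=(q^3-1)/(q-1)=q^2+q+1$; when $q$ is even, $-1=1$ and there are $q-1$ solutions, matching $\deg(V)=q-1$, so all places of $V$ are rational; when $q$ is odd, $q^2+q+1$ is odd, so $-1$ (which has order $2$) is not such a power and $V$ contains no rational place. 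Dually, the residue $\delta$ at a place of $Q$ must satisfy the Artin--Schreier-type relation $\delta+\delta^q=1$; this has solutions in $\mathbb{F}_{q^3}$ precisely when the trace condition is met, and by assertion 3) $\deg(Q)=q$ while the equation $\delta+\delta^q=1$ over $\mathbb{F}_{q^3}$ has exactly one rational root when $q$ is odd (giving the unique rational place in $Q$) and none contributing a new rational place when $q$ is even — indeed the statement only claims a rational place in $Q$ for $q$ odd. The main obstacle is this last bookkeeping: matching the number of solutions of $\mu^{q-1}=-1$ and $\delta+\delta^q=1$ in $\mathbb{F}_{q^3}$ against the degrees of $V$ and $Q$ to conclude that \emph{all} (resp.\ none, resp.\ exactly one) of the places in each family are rational, which requires care about how these polynomial conditions interact with the residue field extensions; everything else is a direct consequence of Proposition \ref{prop:valuation}.
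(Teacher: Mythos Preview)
Your approach is essentially the paper's: classify rational places by their position relative to the supports of $\Div(x)$ and $\Div(y)$, then use assertions 5) and 6) of Proposition~\ref{prop:valuation} to count solutions of $\mu^{q-1}=-1$ and $\delta+\delta^q=1$ in $\mathbb{F}_{q^3}$. Your treatment of $V$ is correct and slightly more explicit than the paper's (which simply asserts the root count).

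The one place where the paper adds something you are missing is precisely the ``main obstacle'' you flag for $Q$. The paper disposes of it with a short Frobenius trick rather than any trace bookkeeping: if $\delta\in\mathbb{F}_{q^3}$ satisfies $\delta+\delta^q=1$, then raising to the $q$-th power gives $\delta^q+\delta^{q^2}=1$; subtracting yields $\delta^{q^2}=\delta$, so $\delta\in\mathbb{F}_{q^2}^{*}\cap\mathbb{F}_{q^3}^{*}=\mathbb{F}_q^{*}$, whence $\delta^q=\delta$ and the equation becomes $2\delta=1$. This has exactly one solution when $q$ is odd and none when $q$ is even, matching $\deg(Q)=q$ to give a unique rational place $Q_{1/2}$ in the odd case and none in the even case. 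With this computation in hand your proposal is complete and matches the paper.
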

\begin{proof}
By Proposition \ref{prop:valuation}, the divisor $ P $ is clearly a rational place. Note that the equation $\mu^{q-1}=-1$ has $ q-1 $ distinct roots in $ \mathbb{F}_{q^3} $ for even $q $ while it has no root for odd $q $. So  Proposition \ref{prop:valuation} implies that the divisor $ V $ can be written as $V = \sum _ \mu V_{\mu}$, where
$V_{\mu}:=(x=0,y=\infty,x^{q}y^{q+1}=\mu)$ represents a rational place in $ V $ when $q$ is odd. 

Not all the roots of the equation $\delta ^ q +\delta =1$ are contained in $ \mathbb{F}_{q^3} $. If $\delta+\delta^{q}=1$,
then $\delta^{q}+\delta^{q^{2}}=1$, and $\delta^{q^{2}-1}=1$. It gives $\delta\in\mathbb{F}_{q^{3}}^{*}\cap\mathbb{F}_{q^{2}}^{*}=\mathbb{F}_{q}^{*}$. Now the equation above becomes $2\delta=1$. Hence, it has exactly one root in $ \mathbb{F}_{q^3} $ if $ q $  is odd, and no root if $ q $ is even. Therefore, the unique rational place in $ Q $ can be written as $ Q_\delta : = (x=\infty, y=\infty, x^{-1}y^q=1/2 ) $. 

Observe that for all $ \alpha \in \mathbb {F}_{q^3}^{*} $, there exist exactly $ q^2 $ distinct elements $\beta \in \mathbb {F}_{q^3}^{*}$ with 
\begin{equation}
\frac{\beta^{q}}{\alpha}+\frac{\beta^{q^{2}}}{\alpha^{q}}+\frac{\beta}{\alpha^{q^{2}}}=1.\label{eq:alpha_beta}
\end{equation}
So there is a unique place $D_{ \alpha, \beta} $ of degree one such that
\[
x \equiv \alpha \mod D_{\alpha, \beta}, \quad y \equiv \beta \mod D_{\alpha, \beta}.
\]
Furthermore, the degree of the divisor $ D:=\sum D_{\alpha, \beta} $ is $ \deg D = (q^3-1)q^2 $.

\end{proof}
Now we come to determine the basis of the Riemann-Roch space $ \mathscr{L}(rQ+sP+tV) $.  A particularly favorable feature of Hermitian curves is that
one can explicitly write a monomial basis for the Riemann-Roch  space of a two-point divisor. This is the reason that Hermitian codes are easy  to encode and decode. Since the curve $ \mathcal{X} $ generalizes the Hermitian curves, we can expect to obtain a monomial basis of $ \mathscr{L}(rQ+sP+tV) $.  
  The following proposition is the main result of the paper which can be applied to encoding multi-point codes. 
\begin{prop}\label{prop:basis_rst}
	{The elements $ x^i y^j $ with $ (i,j)\in\Omega_{r,s,t} $  form a basis of $ \mathscr{L}(rQ+sP+tV) $, where 
		\begin{align*}
			\Omega_{r,s,t}:=\{ (i,j)|-t\leqslant(q+1)i-qj & <  q^{3}+q^{2}+q-t, \\
			-i-q^{2}j  &\leqslant s,\\
			qi+j &\leqslant r~\}.
		\end{align*} 
	}
\end{prop}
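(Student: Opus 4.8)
The plan is to show that the proposed monomials are linearly independent and lie in the Riemann–Roch space, and then to match cardinality against $\dim\mathscr{L}(rQ+sP+tV)$ computed from Riemann–Roch, exactly as one does for Hermitian one- and two-point codes.

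First I would record the divisors of the three ``coordinate'' functions at the relevant places. From Proposition~\ref{prop:valuation}(2) we have $\Div(x)=P+(q+1)V-qQ$ and $\Div(y)=q^{2}P-qV-Q$, hence for a monomial $x^{i}y^{j}$,
\[
v_{P}(x^{i}y^{j})=i+q^{2}j,\qquad v_{Q}(x^{i}y^{j})=-(qi+j),\qquad v_{V}(x^{i}y^{j})=(q+1)i-qj,
\]
where the $V$-valuation is to be read componentwise on each rational place $V_{\mu}$ (all components equal since $x,y$ have the same valuation at each $V_{\mu}$), and similarly at $Q$. Since the only poles of $x$ and $y$ are at $Q$, $V$ (and $P$ for $y$), a monomial $x^{i}y^{j}$ has poles only among $Q,P,V$, so $x^{i}y^{j}\in\mathscr{L}(rQ+sP+tV)$ is equivalent to the three inequalities $qi+j\le r$, $-i-q^{2}j\le s$, and $-t\le (q+1)i-qj$; the upper bound $(q+1)i-qj<q^{3}+q^{2}+q-t$ in $\Omega_{r,s,t}$ must be seen not to be an extra constraint but rather a normalization that makes the representatives $(i,j)$ unique — I would verify that $x^{q}y^{q+1}$ (resp.\ a suitable power) is a function of $V$-valuation $q^{3}+q^{2}+q$ and trivial valuation elsewhere relevant, playing the role of the ``Artin–Schreier shift'' $z+z^{q}=x^{q+1}$ in the Hermitian case, so that multiplying by it translates the $V$-band by a full period without leaving the space. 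This is the step that encodes Equation~(\ref{eq:curve_q=00003D3}).

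Next, linear independence: the functions $x^{i}y^{j}$ for distinct $(i,j)\in\Omega_{r,s,t}$ have pairwise distinct pole orders at $Q$, or failing that at $P$, so no nontrivial $\mathbb{F}_{q^{3}}$-linear combination can vanish — this is the standard valuation argument, and the constraint $0\le(q+1)i-qj+t<q^{3}+q^{2}+q$ together with $\gcd(q,q+1)=1$ guarantees that $(i,j)$ is determined by, say, the pair $\bigl((q+1)i-qj,\ qi+j\bigr)$, which is the reason the band has width exactly one period. So the $x^{i}y^{j}$, $(i,j)\in\Omega_{r,s,t}$, form a linearly independent subset of $\mathscr{L}(rQ+sP+tV)$, giving $\dim\mathscr{L}(rQ+sP+tV)\ge\#\Omega_{r,s,t}$.

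For the reverse inequality I would count lattice points. The set $\Omega_{r,s,t}$ is the set of integer points in a (possibly unbounded, but the three inequalities together cut out a polygon once one also uses the implicit band width) convex region in the $(i,j)$-plane; Pick's theorem — flagged in the introduction as the tool — lets one express $\#\Omega_{r,s,t}$ as area plus boundary corrections, and a direct computation should give $\#\Omega_{r,s,t}=\deg(rQ+sP+tV)+1-g$ whenever this degree is at least $2g-1$, matching Riemann–Roch exactly; for smaller degrees one argues that the omitted points correspond precisely to the dimension drop, or one reduces to the large-degree case by the usual comparison $\dim\mathscr{L}(A+B)-\dim\mathscr{L}(A)\le\deg B$. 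The genus value $g=(q^{4}-3q+2)/2$ from Proposition~\ref{prop:valuation}(1) is what must drop out of the Pick computation, so the arithmetic here is the real content. I expect the main obstacle to be exactly this lattice-point count: getting the boundary contributions right — especially the interaction of the floor/ceiling in the band inequality with the lines $qi+j=r$ and $i+q^{2}j=-s$ at the corners of the polygon — and confirming that the count equals $\deg(rQ+sP+tV)+1-g$ on the nose rather than up to an off-by-one error, is where the care is needed; everything else is the routine valuation bookkeeping sketched above.
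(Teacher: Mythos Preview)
Your approach is essentially the paper's: verify via Proposition~\ref{prop:valuation} that $x^iy^j\in\mathscr L(rQ+sP+tV)$ iff the three linear inequalities hold, use the band on $(q+1)i-qj$ to force distinct $P$-valuations (hence linear independence), and match $\#\Omega_{r,s,t}$ against Riemann--Roch. The paper organises the count slightly differently: it first proves the one-point case $\mathscr L(rQ)$ (Proposition~\ref{prop: basis}) via Pick's theorem on the triangle $OAB$ together with a bootstrap for small $r$, and then obtains the general case by a geometric decomposition showing $\#\Omega_{r,s,t}=\#\Omega_{rQ}+s+(q-1)t=1-g+\deg(rQ+sP+tV)$, rather than applying Pick directly to the three-parameter polygon.

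One concrete slip to fix: the function $x^qy^{q+1}$ does \emph{not} have $V$-valuation $q^3+q^2+q$; from $\Div(x^qy^{q+1})=(q^3+q^2+q)P-(q^2+q+1)Q$ its $V$-valuation is $0$. The band width $q^3+q^2+q$ is not explained by a multiplicative shift that is ``trivial elsewhere'': rather, two monomials share the same $P$-valuation $i+q^2j$ precisely when they differ by the lattice translation $(\Delta i,\Delta j)=(q^2\lambda,-\lambda)$, and such a translation moves $(q+1)i-qj$ by $(q^3+q^2+q)\lambda$. Restricting to a band of that width therefore guarantees pairwise distinct $P$-valuations, which is the linear-independence mechanism the paper actually uses. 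Once you replace your ``Artin--Schreier shift'' heuristic with this lattice argument, the rest of your outline goes through.
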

 We observe that determining the dimension of the Riemann-Roch space is equivalent to calculating the number of lattice points in some region. So our problem becomes a point-counting problem. The proof of the proposition will be given after some preparations.
\begin{lem}\label{lem:suppose}
Suppose $A=(x_{1},y_{1}),B=(x_{2},y_{2})$ are two lattice points on the plane
line $l_{r_0} :ax+by=r_{0}$, where $a$ and $b$ are two integers and coprime;
$l_{A},l_{B}$ are two lines pass through $ A $, $ B $ respectively and parallel
to each other; then for every plane line $l_{r} :ax+by=r$ parallel to $ l_{r_0}  $, the number $\# \sigma_r$ of the lattice
points within the segment $ \sigma_r$ between $ l_A $, $ l_B $ is a
constant $\frac{\left|x_{2}-x_{1}\right|}{\left|b\right|}$=$\frac{\left|y_{2}-y_{1}\right|}{\left|a\right|}$. (The reader shall be careful that we only count once if both two end-points of the segment are lattice points).
\end{lem}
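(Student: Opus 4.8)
The plan is to reduce the count on each parallel line to counting integers in an interval, by means of a single lattice translation that stabilizes the whole pencil of lines parallel to $l_{r_0}$. First I pin down the value of the constant. Put $v=(-b,a)$; since $\gcd(a,b)=1$ and $a(-b)+b(a)=0$, this is a primitive integer direction vector of $l_{r_0}$, so every lattice point of $l_{r_0}$ differs from a fixed one by an integer multiple of $v$. Applying this to $A$ and $B$ (and assuming $A\neq B$, the statement being vacuous otherwise) gives $B-A=Nv$ for some integer $N$; comparing the two coordinates yields $|N|=|x_2-x_1|/|b|=|y_2-y_1|/|a|$, and after possibly interchanging $A$ and $B$ we take $N\geqslant 1$. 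This $N$ is the asserted value; equivalently, the identity $|x_2-x_1|/|b|=|y_2-y_1|/|a|$ follows directly from $a(x_1-x_2)=b(y_2-y_1)$ together with $\gcd(a,b)=1$.

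Next I introduce the translation $\tau\colon P\mapsto P+v$. It is a bijection of $\mathbb{Z}^2$; it maps each line $l_r$ of the pencil onto itself, because $v$ is a direction vector of $l_r$; and, writing $n$ for a normal vector to the parallel lines $l_A$ and $l_B$, it sends the line $\{\langle n,\cdot\rangle=e\}$ to $\{\langle n,\cdot\rangle=e+\langle n,v\rangle\}$. Here $\langle n,v\rangle\neq 0$: its vanishing would mean $l_A$ is parallel to $l_{r_0}$, in which case ``the segment of $l_r$ between $l_A$ and $l_B$'' is not meaningful, so that configuration is tacitly excluded. Since $B-A=Nv$, we get $\langle n,B-A\rangle=N\langle n,v\rangle$, and therefore $\tau^{N}$ maps $l_A$ onto $l_B$.

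Now fix $l_r$ and suppose it carries at least one lattice point $Q_0$ (otherwise $\#\sigma_r=0$; lines containing no lattice point are irrelevant to the point-counting applications and are likewise tacitly excluded). The lattice points of $l_r$ are exactly $\{Q_0+kv:k\in\mathbb{Z}\}$, a single $\tau$-orbit, so if $f$ is the affine coordinate on $l_r$ normalized by $f(Q_0+kv)=k$, then the lattice points of $l_r$ are $f^{-1}(\mathbb{Z})$ and $f\circ\tau=f+1$ on $l_r$. As $l_A$ and $l_B$ are transversal to $l_r$, they meet $l_r$ in single points; if $u=f(l_A\cap l_r)$, then $\tau^{N}(l_A)=l_B$ and $\tau^{N}(l_r)=l_r$ force $f(l_B\cap l_r)=u+N$. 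Hence $\sigma_r=f^{-1}\!\big([u,u+N]\big)$, and $\#\sigma_r$ equals the number of integers in $[u,u+N]$, which is $N+1$ when $u\in\mathbb{Z}$ (both endpoints of $\sigma_r$ are lattice points) and $N$ otherwise. The stated convention---counting coinciding end lattice points only once---subtracts the extra $1$ in the first case, giving $\#\sigma_r=N$ in every case, as claimed.

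I expect the only genuinely delicate point to be exactly this endpoint bookkeeping: the ``count once'' convention is precisely what makes the answer insensitive to whether the cutting lines $l_A,l_B$ happen to pass through lattice points of $l_r$, which is also what is needed later to slice a lattice polygon into strips without double-counting shared boundaries. Everything else is elementary integer linear algebra, once the tacit nondegeneracy hypotheses---$A\neq B$, $l_A$ not parallel to $l_{r_0}$, and $l_r$ containing a lattice point---are made explicit.
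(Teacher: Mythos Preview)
Your proof is correct and follows essentially the same idea as the paper: parametrize the lattice points on each $l_r$ by integer multiples of the primitive direction vector $(\pm b,\mp a)$, observe that the two cutting lines $l_A,l_B$ intercept a segment of fixed combinatorial length $N$ on every $l_r$, and count the integers in the resulting interval. The paper phrases this as ``the horizontal distance $|x_3-x_4|$ is constant and lattice points are spaced $|b|$ apart horizontally,'' then evaluates the constant on $\sigma_{r_0}$ by listing the lattice points from $A$ to $B$.

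Where your write-up differs is in rigor rather than substance: you introduce the translation $\tau$ and the affine coordinate $f$ so that the endpoint issue becomes the explicit dichotomy ``$u\in\mathbb{Z}$ versus $u\notin\mathbb{Z}$,'' which cleanly explains why the count-once convention is exactly what forces the answer to be $N$ in both cases. The paper's proof asserts ``the number $\#\sigma_r$ is independent of $r$'' directly from the constancy of the horizontal width, without spelling out why an interval of length exactly $N|b|$ cannot sometimes contain $N+1$ lattice points; your argument fills that gap. Your remark that $l_r$ must be assumed to contain a lattice point (equivalently, $r\in\mathbb{Z}$) is also a useful clarification the paper leaves implicit.
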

\begin{proof}
	It is well known that the equation of $l_r$ has integer solutions if and only if $ r $ is  divisible by the greatest common factor of $ a $ and $ b $  \cite{Gilbert,Rosen}. Since  $ a $ and $ b $ are coprime, there are two integers $x'$ and $ y' $ such that $ ax'+by'=1 $. Hence, $ a r_0 x'+b r_0 y'=r_0 $. So we find a lattice point $ (x_0,y_0):=(r_0 x', r_0 y') $  on the line $l_r$.  
	
	We claim that  all the lattice points on the plane line $l_r$ are exactly $ \{(x_0+ t b, y_0-t a)| t \in \mathbb {Z} \}  $. Clearly, $ (x_0+ t b, y_0-t a) $ satisfies the equation of $ l_r $. Conversely, if $ (x,y) \in l_r $ is another lattice point, then 
	\[ a(x-x_0)+b(y-y_0)=0 .\]
	Since $ a $ and $ b $ are coprime, we get $ a|(y-y_0) $ and $ b|(x-x_0) $.
	
	Let $ C=(x_3,y_3) $ and $  D=(x_4,y_4) $ be two end-points of the segment $\sigma _ r$. Note that  the horizontal distance $ \left|  x_3-x_4 \right| $ between $ C $ and $ D $ is a constant. Moreover, the minimal horizontal distance between two distinct lattice points on $ \sigma_r $ is also a constant $b$.
	So the number $\# \sigma_r $ is  independent of $r$. Therefore, we only need to  find out all the lattice points on the segment $\sigma _{r_0}$. We see that all the lattice points on $ \sigma_{r_0} $ are 
	\begin{align*}
	&(x_1, y_1)=A, (x_1+ b, y_1 -a),(x_1+ 2 b, y_1 - 2 a),\\
	&\ldots, (x_1+ tb, y_1 -ta)=(x_2,y_2)=B ,
	\end{align*}
	 where $t=\frac{\left|x_{2}-x_{1}\right|}{\left|b\right|}$=$\frac{\left|y_{2}-y_{1}\right|}{\left|a\right|}$, which implies the lemma.
\end{proof}
\begin{figure}[H]
	\centering
	\begin{tikzpicture}[scale=0.65]
	\draw (0,0)--(4,4);
	\draw [dashed] (4,-2)--(8,2);
	 \draw [color=blue][<->] (0,-2)--(4,-2);%
	 \draw [color=blue][dotted] (0,0)--(0,-2);%
   	\draw [color=black] node at (1.5,-2.5) { $ \left| x_1-x_2 \right|  $ };
	\draw (0,0)--(4,-2);
	\draw [red] (3,3)--(7,1); 
    \draw [color=blue][<->] (3,1)--(7,1);%
    \draw [color=blue][dotted] (3,3)--(3,1);%
	\draw [color=black] node at (4.5,0.5) { $ \left| x_3-x_4 \right|  $ };
	\draw [fill] (0,0) circle [radius=0.1];
	\draw [fill] (1,-0.5) circle [radius=0.1];
	\draw [fill] (2,-1) circle [radius=0.1];
	\draw [fill] (3,-1.5) circle [radius=0.1];
		\draw [fill] (3.5,2.75) circle [radius=0.1];
		\draw [fill] (4.5,2.25) circle [radius=0.1];
		\draw [color=blue][dotted] (5.5,2.25)--(5.5,1.75);%
		\draw [color=blue][<->] (4.5,2.25)--(5.5,2.25);
		\draw[color=black] node at (4.95,2.55) { $ b $ };		
		\draw [fill] (5.5,1.75) circle [radius=0.1];
		\draw [fill] (6.5,1.25) circle [radius=0.1];
	\draw[color=black] node at (0,0) [left] {$ A $ }  ;
	\draw[color=black] node [yshift=-1ex,xshift=+1ex] at (4,-2){ $ B $ };
		\draw[color=black] node at (3,3) [left] {$ C $ }  ;
		\draw[color=black] node [yshift=-1ex,xshift=+1ex] at (7,1){ $ D $ };
	\draw[color=black] node [yshift=+1ex,xshift=-1ex] at (4,4) { $ l_A $ };
	\draw[color=black] node [yshift=+1ex,xshift=-1ex] at (8,2) { $ l_B $ };
	\draw[color=black] node [yshift=-1ex,xshift=+1ex] at (4,2) { $ \sigma_r $ };
	\draw[color=black] node [yshift=-1ex,xshift=+1ex] at (1,-1) { $ \sigma_{r_0} $ };
	\end{tikzpicture}
	\protect\caption{}
\end{figure}

\begin{prop}\label{prop: basis}
	\begin{enumerate}
		\item  The elements $ x^i y^j $ with $ (i,j)\in\Omega_{rQ}$ 
		form a basis of $ \mathscr{L}(rQ) $, where 
			\begin{align*}
					\Omega_{rQ}:=\{ (i,j)|0\leqslant(q+1)i-qj&<q^{3}+q^{2}+q, \\
				0 &\leqslant i+q^{2}j,\\
				qi+j &\leqslant r~\}.
			\end{align*}

		\item The elements $ x^i y^j $ with $ (i,j)\in\Omega_{sP}$  form a basis of $ \mathscr{L}(sP) $, where 
				\begin{align*}
					\Omega_ {sP}:=\{ (i,j)|0\leqslant(q+1)i-qj&<q^{3}+q^{2}+q, \\
					0&\leqslant-qi-j,\\
					-i-q^{2}j&\leqslant s~\}.
				\end{align*} 
		
		\item The elements $ x^i y^j $ with $ (i,j)\in\Omega_{tV}$  form a basis of $ \mathscr{L}(tV) $, where 
			\begin{align*}
				\Omega_{tV}:=\{ (i,j)|0\leqslant-qi-j&<q^{3}-1, \\
				0 &\leqslant i+q^{2}j,\\
				-(q+1)i+qj&\leqslant t~\}.
			\end{align*} 
		
	\end{enumerate}
	\begin{figure}[H]
		\centering
		\begin{tikzpicture}[scale=0.35]
		\path [fill=orange] (0,0)--(4,-1) -- (6,2) -- (4,6);
		\draw [blue] (-6,-9)--(6,9);
		\draw [dashed] (4,-1)--(8,5);
		\draw [blue](-8,2)--(8,-2);
		\draw [blue](-4.5,9)--(4.5,-9);
		\draw [blue] (2,3)--(4,-1); 
		\draw [red] (4,6)--(6,2); 
		\draw [dashed] (2,-4)--(-2,-10);
		\draw [red] (-4,-6)--(0,-7);		
		\draw [dashed] (-4,1)--(-7,7);
		\draw [red] (-6,5)--(-4,8);
		\draw[color=black] node at (0,0) [left] {$ O $ }  ;
		\draw[color=black] node [yshift=-1ex,xshift=+1ex] at (4,-1){ $ B $ };
		\draw[color=black] node [yshift=+1ex,xshift=-1ex] at (2,3) { $ A $ };
		\draw[color=black] node [yshift=-1ex,xshift=+1ex] at (6,2){ $ D $ };
		\draw[color=black] node [yshift=+1ex,xshift=-1ex] at (4,6) { $ C $ };
		\draw[color=black] node  at (8,3.5) { $ l_B $ };
		\draw[color=black] node  at (5.05,8.57) { $ l_A $ };
		\draw[color=black] node [yshift=+1ex,xshift=-3ex] at (-3,-4) { $ (q+1)i-qj=0 $ };
		\draw[color=black] node [yshift=+1ex,xshift=-2ex] at (-4,0) { $ i+q^2 j=0 $ };
		\draw[color=black] node [yshift=+1ex,xshift=-2ex] at (6,-8) { $  qi+j=0 $ };
		\draw[color=black] node [yshift=-1ex,xshift=-1ex] at (6,5) { $ l_r $ };
		\draw[color=black] node [yshift=-1ex,xshift=-1ex] at (0,-5) { $ \Omega_{sP} $ };
		\draw[color=black] node [yshift=-1ex,xshift=-1ex] at (-3,5) { $ \Omega_{tV} $ };
	\draw[color=black] node [yshift=-1ex,xshift=-1ex] at (4.5,3) { $ \Omega_{rQ} $ };
		\draw [fill] (0,0) circle [radius=0.1];
		\draw [fill] (2,3) circle [radius=0.1];
		\draw [fill] (4,-1) circle [radius=0.1];
		\end{tikzpicture}
		\protect\caption{}
	\end{figure}
	
\end{prop}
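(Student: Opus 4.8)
The plan is to prove the three statements by a common template: linear independence and membership are read off directly from the divisor of a monomial, whereas the equality $\#\Omega=\ell$ is the real content and is obtained from a lattice‑point count in a ``large'' regime together with a descent on pole order. I would write out the case $\mathscr{L}(rQ)$ in full; the cases $\mathscr{L}(sP)$ and $\mathscr{L}(tV)$ are then the same with the three linear forms $i+q^{2}j,\ (q+1)i-qj,\ -qi-j$ (the multiplicities of $P,V,Q$ in $\Div(x^{i}y^{j})$, by Proposition~\ref{prop:valuation}(2)) permuted. So $x^{i}y^{j}$ has poles only among $P,Q,V$, with $v_{P}=i+q^{2}j$, $v_{V}=(q+1)i-qj$, $v_{Q}=-qi-j$, and for $(i,j)\in\Omega_{rQ}$ the three defining inequalities read exactly $v_{P}\geqslant0$, $v_{V}\geqslant0$, $v_{Q}\geqslant-r$, whence $x^{i}y^{j}\in\mathscr{L}(rQ)$ (the band inequality is not needed here). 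For independence I would use that along each line $i+q^{2}j=c$ the form $(q+1)i-qj$ runs through an arithmetic progression of step $q^{3}+q^{2}+q$, so the band $0\leqslant(q+1)i-qj<q^{3}+q^{2}+q$ meets that line in a single lattice point; hence $(i,j)\mapsto v_{P}(x^{i}y^{j})$ is injective on $\Omega_{rQ}$, and monomials with pairwise distinct valuations at the rational place $P$ are $\mathbb{F}_{q^{3}}$‑independent. In particular $\#\Omega_{rQ}\leqslant\ell(rQ)$.

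Next I would establish spanning for $r$ large. Here $\Omega_{rQ}$ is the trapezoid bounded by the parallel lines $(q+1)i-qj=0$ and $(q+1)i-qj=q^{3}+q^{2}+q$ and by $i+q^{2}j=0$ and $qi+j=r$; counting its lattice points slice by slice parallel to the band edges by means of Lemma~\ref{lem:suppose} (equivalently, after replacing $r$ by a large multiple of $q^{2}+q+1$ so that the four vertices are integral, by Pick's theorem) should yield $\#\Omega_{rQ}=rq+1-g$ with $g=(q^{4}-3q+2)/2$, for all sufficiently large $r$. For such $r$ one has $\deg(rQ)=rq>2g-2$, so Riemann–Roch gives $\ell(rQ)=rq+1-g=\#\Omega_{rQ}$, and the independent family of the first paragraph is already a basis of $\mathscr{L}(rQ)$.

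The descent to arbitrary $r$ rests on a local statement about the graded pieces $\mathscr{L}(mQ)/\mathscr{L}((m-1)Q)$. Put $z:=x^{-1}y^{q}$. Along the line $qi+j=m$ the form $(q+1)i-qj$ increases in steps of $q^{2}+q+1$, so the monomials $x^{i}y^{j}$ with $qi+j=m$ that lie in $\Omega_{rQ}$ (for any $r\geqslant m$) are precisely $x^{i_{0}}y^{j_{0}}z^{-k}$ for $\mu_{m}$ consecutive values $k=0,\dots,\mu_{m}-1$, with $\mu_{m}\leqslant q$ since the band has length $q(q^{2}+q+1)$. I claim these have $\mathbb{F}_{q^{3}}$‑linearly independent images in the graded piece: a relation $\sum_{k}c_{k}x^{i_{0}}y^{j_{0}}z^{-k}\in\mathscr{L}((m-1)Q)$ says, since $v_{R}(x^{i_{0}}y^{j_{0}})=-m$ and $v_{R}(z)=0$ at every place $R\mid Q$, that the $\mathbb{F}_{q^{3}}$‑polynomial $\sum_{k}c_{k}T^{\mu_{m}-1-k}$, of degree $\leqslant q-1$, vanishes at the residue $z(R)$ for all $R\mid Q$; but by the curve equation and the place analysis of Section~2 those residues form a full set of $\mathbb{F}_{q^{3}}$‑conjugacy representatives of the $q$ distinct roots of the separable polynomial $T^{q}+T-1$ (Proposition~\ref{prop:valuation}(6), Corollary~\ref{cor:places}), so a polynomial over $\mathbb{F}_{q^{3}}$ vanishing at all of them vanishes at all $q$ roots, hence is divisible by $T^{q}+T-1$, hence is $0$. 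Granting this, given $f\in\mathscr{L}(rQ)$ I would expand it in the basis $\{x^{i}y^{j}:(i,j)\in\Omega_{NQ}\}$ for $N$ large (so that $\Omega_{rQ}\subseteq\Omega_{NQ}$); if some occurring monomial had $qi+j>r$, then with $m_{0}$ maximal among the $qi+j$ of the occurring terms, the $m_{0}$‑layer would be nonzero in $\mathscr{L}(m_{0}Q)/\mathscr{L}((m_{0}-1)Q)$, forcing $v_{R}(f)=-m_{0}<-r$ for some $R\mid Q$ — impossible. Hence the $x^{i}y^{j}$ with $(i,j)\in\Omega_{rQ}$ span $\mathscr{L}(rQ)$, and together with the first paragraph they form a basis.

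For $\mathscr{L}(sP)$ and $\mathscr{L}(tV)$ I would run the same three steps. The bands $0\leqslant(q+1)i-qj<q^{3}+q^{2}+q$ for $\Omega_{sP}$ and $0\leqslant-qi-j<q^{3}-1$ for $\Omega_{tV}$ again have exactly the length making $i+q^{2}j$ injective on them (one lattice point per fibre line), which gives independence; the graded‑piece lemma is trivial at $P$ (where $\deg P=1$, so a single monomial of exact pole order spans the one‑dimensional piece) and at $V$ it uses $x^{q}y^{q+1}$ in place of $z$, whose residues are the $q-1$ distinct roots of the separable polynomial $T^{q-1}+1$ (Proposition~\ref{prop:valuation}(5),(4)). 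I expect the main obstacle to be the exact lattice count in the second step: the region is a trapezoid with generically non‑integral vertices and a half‑open edge, so Lemma~\ref{lem:suppose} (or a careful Pick's‑theorem bookkeeping) is exactly what is needed to pin down $\#\Omega_{rQ}$ on the nose; by contrast, the graded‑piece lemma, once the relation $z^{q}+z\equiv1$ at $Q$ is in hand, should be short.
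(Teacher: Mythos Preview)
Your plan is correct and follows the same architecture as the paper: membership from $\Div(x^iy^j)$, linear independence via injectivity of $(i,j)\mapsto v_P(x^iy^j)$ on the band, a lattice count matched to Riemann--Roch in a large regime, then descent. Two differences are worth noting. First, the paper does the count at the specific value $r_0=q^3-1$, where $\Omega_{r_0Q}$ is (minus the vertex $B$) the lattice triangle on $O=(0,0)$, $A=(q^2-q,q^2-1)$, $B=(q^2,-1)$, so Pick's theorem together with Lemma~\ref{lem:suppose} for the boundary gives $\#\Omega_{r_0Q}=q^4/2+q/2=\ell(r_0Q)$ exactly; this is cleaner than an unspecified ``large $r$'' since the vertices are honest lattice points. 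Second, for the descent the paper asserts the inequality $\#\Omega_{rQ}-\#\Omega_{(r-1)Q}\leqslant\ell(rQ)-\ell((r-1)Q)$ and telescopes, equality at $r_0$ forcing equality termwise; but the only justification offered (``each new monomial lies in $\mathscr{L}(rQ)\setminus\mathscr{L}((r-1)Q)$'') does not by itself yield linear independence in the quotient. Your residue argument with $z=x^{-1}y^q$ and the separable polynomial $T^q+T-1$ is precisely what is needed to make that inequality rigorous, so your graded-piece lemma supplies a step the paper glosses over; conversely, once that lemma is in hand, the paper's telescoping packages the descent more briefly than your expand-and-peel argument. Your remarks on parts~(2) and~(3)---trivial at $P$ since $\deg P=1$, and at $V$ via $x^qy^{q+1}$ with residues the roots of $T^{q-1}+1$---are correct and more explicit than the paper, which simply says those cases are similar.
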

\begin{proof}
	We prove only the first assertion of this proposition. The other conclusions can be deduced similarly.
	Proposition \ref{prop:valuation} implies
	\begin{align*}
		\Div( x^i y^j ) & =  i\Div( x)+j \Div( y)\\
		& =  iP+(q+1)iV-qiQ+q^{2}j P-q j V-j Q\\
		& =  (i+q^{2}j)P+\left( (q+1) i- q j \right) V-(q i+ j)Q.
	\end{align*}
	Thus, $ x^i y^j \in \mathscr{L}(rQ) $ if and only if $ 0\leqslant(q+1)i-qj$, $ 0\leqslant i+q^{2}j$, and $qi+j\leqslant r $. Hence, all the elements in  $\left\{ x^{i}y^{j}|(i,j)\in\Omega_{rQ}\right\} $ are contained in $ \mathscr{L}(rQ) $. 
	
 Similar to the proof of the above lemma, we assume $(i,j)\in\Omega_{rQ}$, then the valuation of  $x^{i}y^{j}$ at the place $P$ is $ i+q^{2}j $. The
	element $x^{k}y^{l}$ with the same valuation at $P$ satisfies 
	\[
	k=\lambda q^{2}+i,~l=-\lambda+j.
	\]
	By definition,
	\[
	0\leqslant(q+1)i-qj<q^{3}+q^{2}+q,
	\]	
	and 
	\begin{equation*}
			 (q+1) k-q l  =  (q+1) i- q j+ \lambda \left(q^{3}+q^{2}+q\right).
	\end{equation*}
	Hence, $(k,l)$ is outside the set $\Omega_{rQ}$ for $\lambda \not = 0$. It follows that
	all the elements in $\left\{ x^{i}y^{j}|(i,j)\in\Omega_{rQ}\right\} $
	have different valuations at the place $P$, therefore they are linearly independent. To complete the proof, we only need to show that the number $ \#\Omega_{rQ} $ of the set $ \Omega_{rQ} $ is exactly the dimension of $  \mathscr{L}(rQ)  $ for every $ r\geqslant 0 $.
	
	 Let $A=(q^{2}-q,q^{2}-1)$, $B=(q^{2},-1)$, and $O=(0,0)$. Denote by $l_A$ and $l_B$  the  parallel lines   $  (q+1)i-qj=0 $ and $  (q+1)i-qj=q^3+q^2+q $ respectively; and denote by $C$ and $D$ the intersection  points of  the line $l_r: qi+j =r$ and the parallel lines  $ l_A $, $ l_B $. By definition, the set $ \Omega_{r_{0} Q} $ contains exactly the lattice points in the trapezoid $ OBDC $ except the edge $ BD $.  Applying  Pick's Theorem \cite{Haigh,Varberg}, the number $I $ of the lattice points
	in the interior located in the triangle $\triangle OAB$ can be calculated
	by the formula 
	\[
	I=S-M/2+1,
	\]
	where $ S $ is the area of the triangle $\triangle OAB$,
	and $ M $ is the number of lattice points on the boundary. Let $r_{0}:=q^{3}-1$.  The equation of the line $ AB $ is $ l_{r_0}:qi+j=r_0 $. Note that the lattice points in $ \Omega_{r_{0} Q} $ are those in the triangle $\triangle OAB$ except the vertex $ B $. Hence, we have
	\[ \#\Omega_{r_{0} Q}=I + M -1= S +M /2. \]
    It follows from the proof of Lemma \ref{lem:suppose} that  all the
	lattice points on the segment $\overline{OA}$ are 
	 \begin{align*}
	&(0,0)=O, \left(q,q+1\right), \left(2q,2(q+1)\right), \left(3q,3(q+1)\right),\\
	&\ldots,\left( (q-2) q,(q-2)(q+1) \right), \left(q^{2}-q,q^{2}-1\right)=A; 
	\end{align*} 
	and those on the segment $\overline{AB}$ are
	 \begin{align*}
	 &\left(q^{2}-q+1,(q^{2}-1)-q\right), \left(q^{2}-q+2,(q^{2}-1)-2q\right),\\
	 &\ldots,\left(q^{2}-1,q-1\right),\left(q^{2},-1\right)=B.
	 \end{align*} 
	Moreover, there is no lattice point on the segment $\overline{OB} $ except the end-points. Hence, $M =2q$. By direct computation, we find $ S =(q^{2}-1/q)q^{2}/2=q^{4}/2-q/2$. This implies 
	\[ \#\Omega_{r_{0} Q}=q^{4}/2+q/2 .\] On the other hand,	
	since $\deg(r_{0}Q)=q^{4}-q>q^{4}-3q=2g-2$, by the Riemann-Roch Theorem, we obtain
	\begin{align}
	\dim\mathscr{L}(r_{0}Q)&=1-g+\deg(r_{0}Q)\nonumber \\
	&=-(q^{4}-3q)/2+(q^{3}-1)q\nonumber\\
	&=q^{4}/2+q/2=\#\Omega_{r_{0}Q}\label{eq:dim_eq_O},
	\end{align}
	and
	\[
	 \dim\mathscr{L}((r+1)Q)=\dim\mathscr{L}(rQ)+q,\quad \text {for 
	 	$r \geqslant r_0$} .
	\]
	Note that the set consisting of the lattice points on the segment $\overline{CD} $ is equal to $ \Omega_{(r+1)Q} \setminus \Omega_{rQ} $. Now Lemma \ref{lem:suppose} shows that $\#\Omega_{(r+1)Q}=\#\Omega_{rQ}+q$
	for $r\geqslant r_{0}$, therefore $\#\Omega_{rQ}=\dim\mathscr{L}(rQ)$.
	
	It remains to consider the case $r<r_0 $.
	Since the lattice point $ (i,j)\in \Omega_{rQ}\setminus \Omega_{(r-1)Q}  $ represents an element in $ \mathscr{L}(rQ) \setminus \mathscr{L}((r-1)Q $, we obtain 		
    \begin{equation}
	\#\Omega_{rQ} -\#\Omega_{(r-1)Q}\leqslant \dim\mathscr{L}(rQ)-\dim\mathscr{L}((r-1)Q).\label{eq:Omega_eq_dim}
	\end{equation}
	  Sum both sides of Equation (\ref{eq:Omega_eq_dim}). Then $ \#\Omega_{r_0 Q} \leqslant \dim\mathscr{L}(r_0 Q) $. Moreover, Equation (\ref{eq:dim_eq_O}) implies that the equality holds in (\ref{eq:Omega_eq_dim}) for $r\leqslant r_0$. So we conclude that $\#\Omega_{rQ}=\dim\mathscr{L}(rQ)$ for $r\leqslant r_0$.
	\end{proof}
\begin{cor}\label{cor:basis_rQ}
	The elements $ x^i y^j $ with $ (i,j)\in\Omega_{rQ}^{\prime} $ 
	form a basis of $ \mathscr{L}(rQ) $, where 
		\begin{align*}
			\Omega_{rQ}^{\prime}:=\{ (i,j)|0 &\leqslant (q+1)i-qj,~qi+j\leqslant r,   \\
			-1 &<j\leqslant q^{2}-1 ~\} .
		\end{align*}
	\end{cor}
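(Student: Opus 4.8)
The plan is to deduce the corollary from Proposition~\ref{prop: basis}, which already displays the monomials $x^{i}y^{j}$ with $(i,j)\in\Omega_{rQ}$ as a basis of $\mathscr{L}(rQ)$. It therefore suffices to prove three things about the monomials indexed by $\Omega_{rQ}^{\prime}$: (a) each such $x^{i}y^{j}$ lies in $\mathscr{L}(rQ)$; (b) they are linearly independent over $\mathbb{F}_{q^{3}}$; (c) they span $\mathscr{L}(rQ)$.

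For (a), the computation $\Div(x^{i}y^{j})=(i+q^{2}j)P+\bigl((q+1)i-qj\bigr)V-(qi+j)Q$ recorded in the proof of Proposition~\ref{prop: basis} shows that $x^{i}y^{j}\in\mathscr{L}(rQ)$ is equivalent to $i+q^{2}j\geqslant0$, $(q+1)i-qj\geqslant0$ and $qi+j\leqslant r$; the last two are part of the definition of $\Omega_{rQ}^{\prime}$, and the first holds because $0\leqslant j$ and $(q+1)i\geqslant qj\geqslant0$ force $i\geqslant0$. For (b), note that $\Div_{\infty}(x)=qQ$ has degree $q\cdot\deg Q=q^{2}$, so $[\mathbb{F}_{q^{3}}(x,y):\mathbb{F}_{q^{3}}(x)]=q^{2}$ and hence $1,y,\dots,y^{q^{2}-1}$ is a basis of the function field as an $\mathbb{F}_{q^{3}}(x)$-vector space; equivalently, multiplying Equation~(\ref{eq:curve_q=00003D3}) by $x^{q^{2}}$ gives the identity
\[
x^{q^{2}-q}y^{q^{2}}+x^{q^{2}-1}y^{q}+y-x^{q^{2}}=0,
\]
which has degree $q^{2}$ in $y$. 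Since every index in $\Omega_{rQ}^{\prime}$ satisfies $0\leqslant j\leqslant q^{2}-1$, any $\mathbb{F}_{q^{3}}$-linear relation among the $x^{i}y^{j}$, grouped according to the value of $j$, breaks into $\mathbb{F}_{q^{3}}(x)$-linear relations among $1,y,\dots,y^{q^{2}-1}$, forcing all coefficients to vanish.

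For (c), by Proposition~\ref{prop: basis} it is enough to rewrite each $x^{i}y^{j}$ with $(i,j)\in\Omega_{rQ}$ as an $\mathbb{F}_{q^{3}}$-combination of the monomials indexed by $\Omega_{rQ}^{\prime}$. On $\Omega_{rQ}$ one always has $j\geqslant0$ (if $j<0$ then $i+q^{2}j\geqslant0$ gives $i\geqslant q^{2}|j|$, whence $(q+1)i-qj\geqslant|j|(q^{3}+q^{2}+q)$, contradicting $(q+1)i-qj<q^{3}+q^{2}+q$). If $j\leqslant q^{2}-1$, then $(i,j)$ already lies in $\Omega_{rQ}^{\prime}$. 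If $j\geqslant q^{2}$, multiplying the identity above by $x^{i+q-q^{2}}y^{j-q^{2}}$ yields the reduction
\[
x^{i}y^{j}=x^{i+q}y^{j-q^{2}}-x^{i+q-1}y^{j-q^{2}+q}-x^{i+q-q^{2}}y^{j-q^{2}+1},
\]
in which each right-hand monomial has a strictly smaller, still nonnegative, exponent of $y$. Iterating this until every exponent of $y$ is at most $q^{2}-1$ expresses $x^{i}y^{j}$ as a combination of monomials $x^{a}y^{b}$ with $0\leqslant b\leqslant q^{2}-1$; once one knows that these, and all monomials occurring during the reduction, remain in $\mathscr{L}(rQ)$, each such $(a,b)$ automatically satisfies $(q+1)a-qb\geqslant0$, $qa+b\leqslant r$ and $-1<b\leqslant q^{2}-1$, i.e. $(a,b)\in\Omega_{rQ}^{\prime}$, which proves (c).

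The main obstacle is exactly this stability under the reduction, which by induction comes down to one step: if $x^{a}y^{b}\in\mathscr{L}(rQ)$ and $b\geqslant q^{2}$, then each of $x^{a+q}y^{b-q^{2}}$, $x^{a+q-1}y^{b-q^{2}+q}$, $x^{a+q-q^{2}}y^{b-q^{2}+1}$ again lies in $\mathscr{L}(rQ)$. Applying the divisor formula, the $V$-coefficients of the three monomials are $(q+1)a-qb$ increased by $q^{3}+q^{2}+q$, by $q^{3}-1$, and by $0$, so all are $\geqslant0$; the pole orders at $Q$ are $qa+b$, $qa+b$, and $qa+b-(q^{3}-1)$, so all are $\leqslant r$; and the $P$-coefficients are $a+q^{2}b-(q^{4}-q)$, $a+q^{2}b-(q-1)(q^{3}-1)$, and $a+q^{2}b-(q^{4}-q)$. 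The last three are the delicate ones; they are nonnegative because $(q+1)a\geqslant qb$ yields $a+q^{2}b\geqslant bq(q^{2}+q+1)/(q+1)$, and then $b\geqslant q^{2}$ yields $a+q^{2}b\geqslant q^{3}(q^{2}+q+1)/(q+1)>q^{4}-q\geqslant(q-1)(q^{3}-1)$, which is where both hypotheses on $(a,b)$ are used. Once this single-step stability is in hand, (a)--(c) give the corollary. A tidier-looking but more computational alternative for (c) would be to construct an explicit bijection $\Omega_{rQ}\leftrightarrow\Omega_{rQ}^{\prime}$ respecting the lines $qi+j=\text{const}$ and count with Lemma~\ref{lem:suppose}; that, however, needs a fussy case split according to the position of the constant relative to $q^{3}-1$, so I would prefer the reduction argument above.
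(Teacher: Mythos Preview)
Your proof is correct, but it takes a genuinely different route from the paper in establishing part (c).

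The paper handles (a) and (b) exactly as you do. For (c), however, it does \emph{not} rewrite monomials using the curve relation; instead it invokes a pure cardinality argument. Since Proposition~\ref{prop: basis} already gives $\dim\mathscr{L}(rQ)=\#\Omega_{rQ}$, it suffices to show $\#\Omega_{rQ}^{\prime}=\#\Omega_{rQ}$. The paper observes that for $r=r_{0}=q^{3}-1$ the two index sets literally coincide (both are the lattice points of the triangle $OAB$, vertex $B$ excluded), and then Lemma~\ref{lem:suppose} shows that as $r$ increases by one, each set gains exactly $q$ points on the new line $qi+j=r$. This is precisely the ``alternative'' you sketched at the end and rejected as fussy; in the paper's hands it is actually short, because one does not need a case split: for $r\leqslant r_{0}$ the sets are equal, and the lemma handles $r>r_{0}$.

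Your reduction argument has its own merits: it is constructive (it exhibits the expansion of each $x^{i}y^{j}$ with $(i,j)\in\Omega_{rQ}$ in the primed basis), and it avoids appealing to the lattice-counting machinery of Lemma~\ref{lem:suppose}. The price is the stability check for the $P$-coefficients, which you carry out carefully and correctly. Either approach is acceptable; the paper's is shorter given what has already been proved, while yours is more self-contained.
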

	\begin{proof}
		As the statement in the proof  of Proposition \ref{prop: basis}, all the elements in $\left\{ x^{i}y^{j}|(i,j)\in\Omega^\prime_{rQ}\right\} $ are contained in $ \mathscr{L}(rQ) $.
		According to \cite{Bassa}, the polynomial 
		\begin{equation*}
		\phi(y):=\frac{y^{q^{2}}}{x^{q}}+\frac{y^{q}}{x}+\frac{y}{x^{q^{2}}}-1,
		\end{equation*} over $\mathbb{F}_{q^{3}}(x)$ is irreducible. 
		Then $1,y^{1},y^{2},\ldots, y^{q^{2}-1}$ are linearly independent over $\mathbb{F}_{q^{3}}(x)$. So all the elements contained in $\left\{ x^{i}y^{j}|(i,j)\in\Omega^\prime_{rQ}\right\} $ are linearly independent over $\mathbb{F}_{q^{3}}$.  Denote the number of the lattice point set $ \Omega_{rQ}^{\prime} $  by $ \# \Omega_{rQ}^{\prime} $. It is sufficient to show that 
		  $ \#\Omega_{rQ}^{}=\#\Omega_{rQ}^{\prime} $. From the figure below, we see that $ \Omega_{r Q}^{\prime} $ (resp. $ \Omega_{r Q} $) contains  exactly the lattice points in the polygon $OBDC$ (resp. $ OBD'C'A $) except  the edge $BD$ (resp. $ BD' $); and in particular $ \Omega_{r_0 Q}^{\prime} $ (resp. $ \Omega_{r_0 Q} $) contains  exactly the lattice points in the triangle $\triangle OAB$ except the vertex $B$. Thus,  $ \#\Omega_{r_0 Q}^{}=\#\Omega_{r_0 Q}^{\prime} $. Now the corollary follows from Lemma \ref{lem:suppose}.
	\end{proof}
	\begin{figure}[H]
		\centering
		\begin{tikzpicture}[scale=0.35]
		\path [fill=orange] (0,0)--(4,-1)--(8,5)--(6,9);
		\path [fill=yellow] (0,0)--(4,-1)--(11,-1)--(9,3)--(2,3);
		\draw  [blue] (-2,-3)--(7,10.5);
		\draw (2,3)--(10,3);
		\draw [dashed] (4,-1)--(9,6.5);
		\draw [dashed] (4,-1)--(12,-1);
		\draw [blue](-8,2)--(10,-2.5);
		\draw [blue](-4.5,9)--(2,-4);
		\draw [blue] (2,3)--(4,-1); 
		\draw [red] (6,9)--(8,5); 
		\draw [red] (9,3)--(11,-1); 
		\draw[color=black] node at (0,0) [left] {$ O $ }  ;
		\draw[color=black] node [yshift=-1ex,xshift=+1ex] at (4,-1){ $ B $ };
		\draw[color=black] node [yshift=+1ex,xshift=-1ex] at (2,3) { $ A $ };
		\draw[color=black] node [yshift=-1ex,xshift=+1ex] at (8,5){ $ D $ };
		\draw[color=black] node [yshift=+1ex,xshift=-1ex] at (6,9) { $ C $ };
		\draw[color=black] node [yshift=-1ex,xshift=+1ex] at (11,-1){ $ D' $ };
		\draw[color=black] node [yshift=+1ex,xshift=-1ex] at (9,3) { $ C' $ };
		\draw[color=black] node [yshift=+1ex,xshift=-2ex] at (-3,-4) { $ (q+1)i-qj=0 $ };
		\draw[color=black] node [yshift=+1ex,xshift=-2ex] at (-4,0) { $ i+q^2 j=0  $ };
		\draw[color=black] node [yshift=+1ex,xshift=-2ex] at (-4,6) { $ qi+j=0  $ };
		\draw[color=black] node [yshift=-1ex,xshift=-1ex] at (6,5) { $ \Omega_{rQ} $ };

		\draw[color=black] node [yshift=-1ex,xshift=-1ex] at (9,1) { $ \Omega_{rQ}^\prime $ };
				\draw [fill] (0,0) circle [radius=0.1];
				\draw [fill] (2,3) circle [radius=0.1];
				\draw [fill] (4,-1) circle [radius=0.1];
		\end{tikzpicture}
		\protect\caption{}
	\end{figure}
	Now we are in a position to give the proof of Proposition \ref{prop:basis_rst}.
		\begin{proof}[Proof of Proposition \ref{prop:basis_rst}]
			As shown in the figure below, we have
			\[  \Omega_{r,s,t}=D_{1}'\cup D_{2}\cup D_{3}\cup D_{4}'\cup D_{5}'\cup D_{6}\cup D_{7} , \] 
			and 
			\[ \Omega_{rQ}=D_{1}\cup D_{2}\cup D_{3}\cup D_{4}. \]
			Clearly, $ \#D_{i}=\#D_{i}' $, so we obtain 
			\begin{align*}
			\#\Omega_{r,s,t}&=\#\Omega_{rQ}+\#(D_{5}\cup D_{6})+\#D_{7}\\
			&=\#\Omega_{rQ}+s+(q-1)t\\
			&=1-g+\deg(rQ+sP+tV).	
			\end{align*}
			The Riemann-Roch Theorem implies 
			\[ \#\Omega_{r,s,t}	= \dim \mathscr{L}(rQ+sP+tV).\]
			As the statement in the proof of Proposition \ref{prop: basis}, we find that  $\left\{ x^{i}y^{j}|(i,j)\in\Omega_{r,s,t}\right\} $ is
			a basis of $\mathscr{L}(rQ+sP+tV)$.
			\end{proof}
				\begin{figure}[H]
					\centering
					\begin{tikzpicture}[scale=0.35]
					\path [fill=orange] (0,0)--(4,-1)--(6,2)--(4,6);
					\path [fill=yellow] (1.4,-1.75)--(1.4-4,-1.75+1)--(4-1*0.91,6+2*0.91)--(6-1*0.91,2+2*0.91);
					\draw  [blue] (-2,-3)--(6,9);

					\draw [dashed] ( 3.2,  -2.2 )--(9,6.5);
					    \draw [blue] ( -0.8, -1.2 )--(3.2, -2.2 );
						\draw [black, dashed] (1.4,-1.75)--(1.4+7,-1.75+7/2*3);
						\draw [black] (1.4,-1.75)--(1.4-4,-1.75+1);
						\draw [black] (1.4-4,-1.75+1)--(1.4-4+7,-1.75+1+7/2*3);
					\draw [blue](-8,2)--(8,-2);			
					\draw [thin, dashed, blue](-4.5,9)--(2,-4);
					\draw [blue] (2-1*0.91,3+2*0.91)--(4,-1); 
					\draw [red] (4-1*0.91,6+2*0.91)--(6,2);

					\draw[color=black] node [yshift=-1ex,xshift=-3ex] at (4,-1){ $ D_5 $ };
					\draw[color=black] node [yshift=1.35ex,xshift=-2ex] at (0,0){ $ D_1' $ };
					\draw[color=black] node [yshift=-1.25ex,xshift=-3ex] at (0,0){ $ D_5' $ };
						\draw[color=black] node [yshift=-2.25ex,xshift=2ex] at (0,0){ $ D_6 $ };

					\draw[color=black] node  at (3.1,-0.4) { $ D_1 $ };
					\draw[color=black] node  at (0.5,2.6) { $ D_7 $ };
					\draw[color=black] node  at (1.7,0.9) { $ D_2 $ };
					\draw[color=black] node  at (3.1+1.8,-0.4+2.5) { $ D_4 $ };
					\draw[color=black] node  at (0.5+2,2.6+3) { $ D_4' $ };
					\draw[color=black] node  at (1.7+2,0.9+3) { $ D_3 $ };
						\draw[color=black] node [yshift=+1ex,xshift=-2ex] at (-3,-4) { $ (q+1)i-qj=0 $ };
						\draw[color=black] node [yshift=+1ex,xshift=-2ex] at (-4,0) { $ i+q^2 j=0 $ };
						\draw[color=black] node [yshift=+1ex,xshift=-2ex] at (-4,6) { $ qi+j=0 $ };
							\draw [fill] (0,0) circle [radius=0.1];
							\draw [fill] (2,3) circle [radius=0.1];
							\draw [fill] (4,-1) circle [radius=0.1];
					\end{tikzpicture}
					\protect\caption{}
				\end{figure}
	We remake that our method in the proof above can be applied to determining the basis of $\mathscr{L}(rQ+sP+tV) $ on the curve (\ref{eq:the_plane_curve}) with $ j_0=1,k_0=n_0-1 $.
\section{the code $ C(\mathcal{X},D+P+V,rQ) $}
Let $ E:=D+P+V $. Throughout this section, we assume that $ q $ is even. Then the divisor $V$ consists  of rational places by Corollary \ref{cor:places}. We study the linear code
 \[
 C_{r}=C(\mathcal{X},E,rQ).
 \] 
 The length of $C_{r}$ is $n_{1}:=\deg(E)=\deg(D+P+V)=1+q-1+(q^{3}-1)q^{2}=q(q^{4}-q+1)$.
 It is well known that the dimension of an algebraic geometric code $ C(\mathcal{X},E,G) $ is given by
 \begin{equation}
 \dim C_{r}=\dim\mathscr{L}(G)-\dim \mathscr{L}(G-E).\label{eq:dim}
 \end{equation}
Let $ R_1: = (n_{1}+2g-2)/q=q^{4}+q^{3}-q-2 $. For $r>R_1$,  the Riemann-Roch Theorem and Equation (\ref{eq:dim}) yield 
  \begin{align*}
  \dim C_{r}&=(1-g+\dim(rQ))-(1-g+\deg(rQ-E))\\
  &= \deg E = n_1.
  \end{align*}
  Hence $ C_{r} = \mathbb {F}_{q^3}^{n_1}$ in this case which is trivial. So we should only consider the case $0 \leqslant r \leqslant R_1$. 
  
\begin{defn}
	 Two codes $ C_1, C_2 \subseteq \mathbb {F}^{n}_{q^3} $ are said to be \textbf{equivalent}  if there is a vector $ a =(a_1, a_2, \ldots, a_n ) \in (\mathbb {F}_{q^3}^{*})^n $ such that $ C_2 = a \cdot C_1 $; i.e.,
	 \[
	 C_2 = \left\{(a_1 c_1, a_2 c_2, \ldots , a_n c_n)|(c_1, c_2,\ldots, c_n)\in C_1\right\}.
	 \] 
	 Denote by $ C^\bot $ the dual of $ C $. The code $ C $ is called \textbf{self-dual} (resp. \textbf{self-orthogonal}) if $ C =  C ^\bot$ (resp. $ C \subseteq  C ^\bot$). The code $ C $ is called \textbf{self-equivalent} if $ C $ is equivalent to  $ C ^\bot$.
\end{defn}

  We need the following lemma which is shown in \cite{Stichtenoth}.
\begin{lem}[\cite{Stichtenoth}]\label{lem:dual}
	Let $\eta$ be a Weil differential such that $v_{P_{i}}(\eta)=-1$
		for $i=1,\ldots,n$. Then the dual of $  C({\mathcal{X}},D,G) $ is equivalent to the code $ C({\mathcal{X}},D,D-G+\Div( \eta)) $. Furthermore, denote by $ \res_{P}(\eta)$ the residue of $\eta $ at $P $, then each vector in $ C({\mathcal{X}},D,G)^{\bot}  $ can be written as 
		\[
		\left( \res_{P_{1}}(\eta)c_{1},\res_{P_{2}}(\eta)c_{2},\ldots,\res_{P_{n}}(\eta)c_{n} \right) ,
		\]
		where $\left( c_{1},c_{2},\ldots,c_{n} \right) \in  C({\mathcal{X}},D,G) $.
		Moreover, if {$\res_{P_{i}}(\eta)=1$,
		then the dual of $C({\mathcal{X}},D,G)$ is equal to $C({\mathcal{X}},D,D-G+\Div( \eta))$.}\end{lem}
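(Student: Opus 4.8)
The plan is to follow the classical development of differential (residue) codes and their link to functional codes. First I would introduce the residue code
\[
C_{\Omega}(\mathcal{X},D,G):=\{(\res_{P_{1}}(\omega),\ldots,\res_{P_{n}}(\omega)):\omega\in\Omega(G-D)\},
\]
where $\Omega(G-D)$ is the space of Weil differentials $\omega$ with $\Div(\omega)\geqslant G-D$. The first key step is the duality $C(\mathcal{X},D,G)^{\bot}=C_{\Omega}(\mathcal{X},D,G)$. For the inclusion $C(\mathcal{X},D,G)\subseteq C_{\Omega}(\mathcal{X},D,G)^{\bot}$, take $f\in\mathscr{L}(G)$ and $\omega\in\Omega(G-D)$; then $f\omega$ has at worst simple poles, located among $P_{1},\ldots,P_{n}$, so the residue theorem gives $\sum_{i}\res_{P_{i}}(f\omega)=0$, and since $\res_{P_{i}}(f\omega)=f(P_{i})\res_{P_{i}}(\omega)$ this is precisely the orthogonality relation. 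Equality then follows from a dimension count: combining $\dim C(\mathcal{X},D,G)=\ell(G)-\ell(G-D)$, the isomorphism $\Omega(A)\cong\mathscr{L}(W-A)$ for a canonical divisor $W$, and the Riemann-Roch Theorem applied to $G$ and to $G-D$, one obtains $\dim C(\mathcal{X},D,G)+\dim C_{\Omega}(\mathcal{X},D,G)=n$.

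The second step is to convert the residue code into a functional code via $\eta$. Since $v_{P_{i}}(\eta)=-1$ for $i=1,\ldots,n$, the assignment $\omega\mapsto\omega/\eta$ is a well-defined $\mathbb{F}_{q^{3}}$-linear bijection from differentials to functions, with $\Div(\omega/\eta)=\Div(\omega)-\Div(\eta)$; hence $\omega\in\Omega(G-D)$ if and only if $\omega/\eta\in\mathscr{L}(D-G+\Div(\eta))$. In particular $\Omega(G-D)\cong\mathscr{L}(D-G+\Div(\eta))$, which already yields the equality of dimensions needed below.

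The third step is the residue bookkeeping at each $P_{i}$. Put $g:=\omega/\eta$. The coefficient of $P_{i}$ in $D-G+\Div(\eta)$ equals $1-0+(-1)=0$, because $D$ and $G$ have disjoint support; thus $g$ is regular at $P_{i}$ and $g(P_{i})$ makes sense. Writing $\eta=\bigl(\res_{P_{i}}(\eta)\,t^{-1}+O(1)\bigr)dt$ in a local parameter $t$ at $P_{i}$ and $g=g(P_{i})+O(t)$, one gets $\res_{P_{i}}(\omega)=\res_{P_{i}}(g\eta)=g(P_{i})\res_{P_{i}}(\eta)$. Therefore the codeword of $C_{\Omega}(\mathcal{X},D,G)$ attached to $\omega$ is $\bigl(\res_{P_{1}}(\eta)g(P_{1}),\ldots,\res_{P_{n}}(\eta)g(P_{n})\bigr)$ with $g\in\mathscr{L}(D-G+\Div(\eta))$; that is, $C_{\Omega}(\mathcal{X},D,G)$ is the coordinatewise scaling of $C(\mathcal{X},D,D-G+\Div(\eta))$ by the nonzero vector $\bigl(\res_{P_{1}}(\eta),\ldots,\res_{P_{n}}(\eta)\bigr)$. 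Together with Step~1 this shows that $C(\mathcal{X},D,G)^{\bot}$ is equivalent to $C(\mathcal{X},D,D-G+\Div(\eta))$ and gives the stated description of its vectors; if moreover $\res_{P_{i}}(\eta)=1$ for all $i$, the scaling vector is $(1,\ldots,1)$ and the two codes coincide.

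The main obstacle is the residue computation of Step~3: one must check that $f\omega$ (resp.\ $g\eta$) has no poles outside $\{P_{1},\ldots,P_{n}\}$ and only simple poles there, so that the residue theorem genuinely applies, and that the local identity $\res_{P_{i}}(g\eta)=g(P_{i})\res_{P_{i}}(\eta)$ is an exact equality rather than one up to lower-order corrections — this rests on the elementary fact that multiplying a differential having a simple pole by a function regular at that point rescales the residue by the value of the function. The remaining ingredients are the Riemann-Roch Theorem and elementary linear algebra.
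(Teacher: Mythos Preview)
Your argument is correct and is precisely the standard proof from Stichtenoth's textbook, which is what the paper cites without reproducing a proof; the lemma is stated there as a known result, so there is nothing further to compare. One minor remark: in the displayed description of the vectors in $C(\mathcal{X},D,G)^{\bot}$ the paper writes $(c_{1},\ldots,c_{n})\in C(\mathcal{X},D,G)$, but as your Step~3 correctly derives, the tuple should lie in $C(\mathcal{X},D,D-G+\Div(\eta))$; this is a typo in the statement rather than an issue with your proof.
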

\begin{prop}
	\label{prop:The-dual-of}The dual of $C_{r}$ is
		\[ C_{r}^{\bot}=C_{R_1-r}. \]
	 Hence $C_{r}$ is self-orthogonal if $ 2r \leqslant R_1 $, and $C_{r}$ is self-dual for $ r = R_1/2 $.
\end{prop}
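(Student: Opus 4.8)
The plan is to apply Lemma~\ref{lem:dual} with a carefully chosen Weil differential $\eta$. The key is to find a differential whose divisor is supported away from $E = D+P+V$ on the "code" part, has all residues equal to $1$ (or at least equal, so that the dual is equivalent to an evaluation code of the expected shape), and whose divisor, when combined with $D-rQ$, again produces a divisor of the form $(R_1-r)Q$ up to the $E$-part. Concretely, I would look for a function $f$ on $\mathcal X$ such that $\eta := \frac{df}{f-\text{(something)}}$, or more simply a differential of the form $\eta = u\,dx$ or $\eta = u\,dt$ for a suitable uniformizer-type function $u$, has simple poles exactly at the rational places in $E$ with residue $1$ at each, and whose remaining divisor is a multiple of $Q$. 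A natural candidate is built from $x$ and $y$: since the places in $E$ are precisely the zeros of some explicit rational function vanishing simply at each $D_{\alpha,\beta}$, at $P$, and at each $V_\mu$, one takes the logarithmic differential of that function.

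First I would use Proposition~\ref{prop:valuation} to compute $\Div(dx)$ (the different of the extension, or directly via the known ramification of $x$ on $\mathcal X$), expressing it as an explicit combination of $P$, $Q$, $V$, and possibly the places in $D$. Second, I would identify a rational function $h$ whose zero divisor is exactly $D + P + V + (\text{multiple of }Q)$ — for instance something like $h = x^{q^2-q+1}(x^{q^3-1}-1)/(\dots)$ adapted so that $h$ vanishes simply at every place of $E$ — and set $\eta = dh/h$. Then $\Div(\eta) = \Div(dh) - \Div(h)$, which by the residue theorem gives $v_{P_i}(\eta) = -1$ at each place $P_i$ of $E$ with $\res_{P_i}(\eta)$ equal to the order of the zero, namely $1$. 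Third, I would read off $\Div(\eta) = E - E + (\text{something})$; more precisely Lemma~\ref{lem:dual} says $C_r^\bot = C(\mathcal X, E, E - rQ + \Div(\eta))$, and I need to check that $E - rQ + \Div(\eta) = (R_1 - r)Q$, which is exactly the degree bookkeeping $\deg(\Div\eta) = 2g-2$ together with $\deg E = n_1$ and the definition $R_1 = (n_1 + 2g-2)/q$, plus the fact that $\Div(\eta) + E$ is supported only at $Q$. Once the support and residue claims are in place, this identification is forced.

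Having established $C_r^\bot = C_{R_1 - r}$, the two corollaries are immediate: $C_r$ is self-orthogonal iff $C_r \subseteq C_r^\bot = C_{R_1-r}$, and since $C_r \subseteq C_{r'}$ whenever $r \leqslant r'$ (because $\mathscr L(rQ) \subseteq \mathscr L(r'Q)$ and the evaluation map is the same), this holds as soon as $r \leqslant R_1 - r$, i.e. $2r \leqslant R_1$; and self-duality $C_r = C_r^\bot$ forces $r = R_1 - r$, i.e. $r = R_1/2$ (which requires $R_1$ even, automatically true here since $R_1 = q^4+q^3-q-2$ is even for even $q$).

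The main obstacle I expect is the explicit construction and verification of the differential $\eta$: one must exhibit a concrete function $h$ (or a concrete $\eta = u\,dx$) and verify (i) that its zero divisor on the affine part is precisely $\sum D_{\alpha,\beta} + P + V$ with each zero simple, using Proposition~\ref{prop:valuation} and the defining equation~(\ref{eq:curve_q=00003D3}); (ii) that all residues equal $1$, which typically follows from the zeros being simple zeros of $h$ in a logarithmic differential but needs care at the "special" places $P$ and $V_\mu$ where $x$ and $y$ have higher-order behaviour; and (iii) that the leftover part of $\Div(\eta)$ is exactly a power of $Q$ with the right multiplicity $R_1 - r - $ well, independent of $r$. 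Steps (i) and (ii) are where the arithmetic of the curve genuinely enters; the rest is degree counting and the monotonicity of the codes $C_r$ in $r$.
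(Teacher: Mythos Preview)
Your approach is exactly the paper's: the concrete choice is $h = t := x^{q^3}-x$, so that $\eta = dt/t = -dx/(x^{q^3}-x)$ has $\Div(\eta) = -E + R_1 Q$, and one then checks $\res_{P'}(\eta)=1$ at every place of $E$; the only nontrivial case is $P'=V_\mu$, where $t$ has a zero of order $q+1$ rather than a simple zero, but the residue is still $1$ because $q+1\equiv 1$ in characteristic $p\mid q$ (the paper verifies this instead by an explicit local computation after the substitution $z=xy$). Your deduction of the self-orthogonal and self-dual consequences from $C_r^{\bot}=C_{R_1-r}$ is correct.
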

\begin{proof}
	Proposition \ref{prop:valuation} shows
	\begin{align*}
		\Div( x) & =  \Div_{0}(x)-\Div_{\infty}(x)\\
		& =  P+(q+1)V-q Q,
	\end{align*}
	and
	\begin{align*}
		\Div( y) & =  \Div_{0}(y)-\Div_{\infty}(y)\\
		& =  q^{2} P-q V- Q.
	\end{align*}
	Consider the element
	\[
	t:=\prod_{\alpha\in\mathbb{F}_{q^3}}(x-\alpha)=x^{q^{3}}-x.
	\]
    Then $t$ is a prime element for all places $ D_{\alpha, \beta} $, and its  divisor is 
	\begin{align*}
		\Div( t) & =  \Div_{0}(x)+D-q^{3}\Div_{\infty}(x)\\
		& =  P+(q+1)V+D-q^{4}Q.
	\end{align*}
	The differential $ dt $ has the divisor 
	\begin{align*}
		\Div( dt) & =  \Div( -dx)=-2\Div_{\infty}(x)+\Diff(F/K(x))\\
		& =  -2 q Q +(q^{3}+q-2)Q+qV\\
		& =  (q^{3}-q-2)Q+qV,
	\end{align*}
	where we use the formula  	$ \Diff(F/K(x))=(q^{3}+q-2)Q+qV $ according to \cite{Bassa}. Let $ \eta := dt/t $ be a Weil differential. The divisor of $ \eta $ is
	\begin{align*}
	\Div (\eta) &= \Div(dt)-\Div(t)\\
		& =  (q^{3}-q-2)Q+qV
		\\
		& \quad  -P-(q+1)V-D+q^{4}Q\\
		& = -P-V-D+(q^{4}+q^{3}-q-2)Q\\
		& = -P-V-D+R_1 Q.
	\end{align*}
	Clearly, the Weil differential $ \eta  $ satisfies the condition in  Lemma \ref{lem:dual}; therefore the dual $ C_{r}^{\bot}  $ of $ C_{r} $ is equivalent to 
	\begin{align*}
	   C\left({\mathcal{X}},E,E -r Q +\Div(\eta)\right)
		& =  C\left({\mathcal{X}},E,P+D+V-r Q\right.\\
		& \quad  \left.-P-V-D+R_1 Q\right)\\
		& =  C\left({\mathcal{X}},E,(R_1 -r)Q\right)\\
		& =  C_{R_1 -r} .
	\end{align*}
	It remains to show that $ \res_{P'}(\eta)=1  $, for cases $ P '=P$, $P'=D_{\alpha,\beta} $, and $P'= V_\mu$. Only the last case $ P '= V_\mu $ is non-trivial. Define  	$ z:=xy $. Then $z$ is prime at $ V_\mu $ and its divisor is
	\begin{align*}
		\Div( z) & =  \Div( x)+\Div( y)\\
		& =  P+(q+1) V - qQ+q^{2}P- q V- Q \\
		& =  (1+q^{2})P+V-(q+1)Q.
	\end{align*}
	By Equation (\ref{eq:curve_q=00003D3}), 
	\[ x^{q+q^{2}}y^{q}+x^{1+q^{2}}y^{q^{2}}+x^{1+q}y=x^{1+q+q^{2}}. 
	\]
	Replacing $ xy $ by $ z  $, we get
	\begin{equation}
	x^{q^2}z^{q}+xz^{q^{2}}+x^{q}z=x^{1+q+q^{2}}\label{eq:curve_x_z}.
	\end{equation}
	The differential of Equation (\ref{eq:curve_x_z}) is	
	\[
	-\left(z^{q^{2}}-(1+q+q^{2})x^{q+q^{2}}\right)dx=x^{q}dz.
	\]
	This implies 
	\begin{align*}
	 \eta &=	\frac{dt}{t}  =  \frac{-dx}{x^{q^3}-x}\\
		&=\frac{x^{q}dz}{\left(z^{q^{2}}-(1+q+q^{2})x^{q+q^{2}}\right)(x^{q^3}-x)}\\
		& =  \frac{x^{q-1}dz}{\left(z^{q^{2}}+*\right)(-1+*)}\\
		& =  \left(-\frac{x^{q-1}}{z^{q^{2}-1}}+*\right)\frac{dz}{z}.
	\end{align*}
	By Proposition \ref{prop:valuation}, we obtain $\frac{x^{q-1}}{z^{q^{2}-1}}=(x^{q}y^{q+1})^{q-1}\equiv-1 \mod V_\mu $.
	Hence, $\res_{V_{\mu}}(\eta)=\res_{V_{\mu}}({dt}/{t})=1$. Now,  Lemma \ref{lem:dual}
	shows $C_{r}^{\bot}=C_{R_1-r}$.\end{proof}
\begin{prop}\label{prop:dimension}
	{{Suppose that}} {$0\leqslant r\leqslant R_{1}$. Then the following holds:}\end{prop}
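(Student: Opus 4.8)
The plan is to compute $\dim C_r = \dim\mathscr{L}(rQ) - \dim\mathscr{L}(rQ-E)$ using Equation (\ref{eq:dim}) and the basis description from Proposition \ref{prop: basis} / Corollary \ref{cor:basis_rQ}. First I would record the two ``trivial'' ranges already observed: for $r > R_1$ the code fills the whole space, and for $r < 0$ the code is zero, so the content is in $0 \leqslant r \leqslant R_1$. Within this range the key observation is the self-duality relation $C_r^\bot = C_{R_1-r}$ from Proposition \ref{prop:The-dual-of}, which gives $\dim C_r + \dim C_{R_1-r} = n_1$; hence it suffices to determine $\dim C_r$ on the lower half $0 \leqslant r \leqslant R_1/2$ (or up to some convenient threshold) and then reflect. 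For small $r$, the term $\dim\mathscr{L}(rQ-E)$ vanishes because $rQ - E$ has negative degree (indeed $\deg(rQ-E) = qr - n_1 < 0$ once $r < n_1/q = q^4 - q + 1$), so in that subrange $\dim C_r = \dim\mathscr{L}(rQ) = \#\Omega_{rQ}$, and this lattice-point count is exactly what Lemma \ref{lem:suppose} and the computation in the proof of Proposition \ref{prop: basis} control: $\#\Omega_{rQ} = \#\Omega_{(r-1)Q} + \#\overline{CD}$, where the increment is the number of lattice points on the segment cut from the trapezoid by the line $qi+j=r$.

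The main work is therefore to make the increment $\#\Omega_{rQ} - \#\Omega_{(r-1)Q}$ explicit as a function of $r$. I would argue as in Lemma \ref{lem:suppose}: the lattice points added when passing from $(r-1)Q$ to $rQ$ all lie on the line $l_r : qi+j = r$ inside the strip $0 \leqslant (q+1)i - qj < q^3+q^2+q$, between the ray $\overline{OA}$ extended ($i + q^2 j = 0$, i.e.\ $j = -i/q^2$) and the line $l_B$. For $r \geqslant r_0 = q^3-1$ the strip is full and the increment is the constant $q$, recovering the Riemann--Roch regime. For $0 \leqslant r < r_0$ the relevant segment is truncated by the edge $i + q^2 j \geqslant 0$, so the increment grows roughly linearly; carrying out the gcd/division bookkeeping of Lemma \ref{lem:suppose} on the line $qi+j=r$ with the coprimality of $q$ and $1$ (trivial here, every integer point $(i, r-qi)$ lies on it) and intersecting with the two half-plane conditions should yield a clean closed form — I expect something piecewise, with breakpoints at multiples of $q$ or at $q^2$, reflecting where $l_r$ starts to miss the truncating edge. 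Summing these increments from $r=0$ (base case $\#\Omega_{0\cdot Q} = 1$, the constant function) gives $\dim\mathscr{L}(rQ)$, hence $\dim C_r$, on the lower range.

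Finally I would assemble the global formula by cases: (i) $r > R_1$: $\dim C_r = n_1$; (ii) an upper transitional range where $rQ - E$ has nonnegative degree and possibly nonzero dimension, handled by Riemann--Roch on $rQ-E$ together with the already-known $\dim\mathscr{L}(rQ)$, or more slickly by the duality $\dim C_r = n_1 - \dim C_{R_1-r}$; (iii) the lower range $0 \leqslant r$ small, where $\dim C_r = \#\Omega_{rQ}$ with the explicit count just derived; and (iv) possibly a middle ``stable'' band. The cleanest organization is probably to state the answer only for $0 \leqslant r \leqslant R_1$, split at $R_1/2$ and use the self-dual reflection, so that one only ever needs the lattice-point count for $\mathscr{L}(rQ)$ and never has to analyze $\mathscr{L}(rQ-E)$ directly. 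The main obstacle I anticipate is purely combinatorial: pinning down the exact piecewise form of $\#\Omega_{rQ}$ for small $r$ — getting the breakpoints and the boundary-lattice-point conventions (the ``count once'' caveat in Lemma \ref{lem:suppose}) right — rather than anything conceptually deep, since all the algebraic geometry (genus, divisor classes, the differential $\eta$, duality) is already in hand.
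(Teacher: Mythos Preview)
Your approach is essentially the paper's: use $\dim C_r = \dim\mathscr{L}(rQ) - \dim\mathscr{L}(rQ-E)$, observe that $\deg(rQ-E)<0$ exactly when $r<n_1/q=q^4-q+1$ so that $\dim C_r = \#\Omega_{rQ}$ in that range, and handle the complementary range $q^4-q+1\leqslant r\leqslant R_1$ by the duality $\dim C_r = n_1 - \dim C_{R_1-r}$ from Proposition~\ref{prop:The-dual-of} (noting that then $s=R_1-r\leqslant q^3-3$ falls back into the first range). That is precisely what the paper does, in three lines.

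Where you diverge is in the amount of work you plan. The proposition as stated does \emph{not} ask for a closed piecewise formula for $\#\Omega_{rQ}$: part~1 simply records $\dim C_r = \#\Omega'_{rQ}$ (respectively $n_1-\#\Omega'_{sQ}$) and stops there, so your entire middle paragraph about tracking the increments $\#\Omega_{rQ}-\#\Omega_{(r-1)Q}$ along the segments of $l_r$, finding breakpoints, and summing is unnecessary. Part~2 is obtained not by summing increments but by a one-line application of Riemann--Roch: for $q^3-3<r$ one has $\deg(rQ)=qr>q^4-3q=2g-2$, hence $\dim\mathscr{L}(rQ)=qr+1-g$. Part~3, which you do not mention, is just the Goppa bound $d\geqslant n_1-\deg(rQ)=q(q^4-q+1-r)$. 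So your plan is correct but substantially overshoots; the actual proof is much shorter than what you outline.
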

\begin{enumerate}
	\item The dimension of $C_{r}$ is given by 
	\[
	\dim C_{r}=\begin{cases}
	\#\Omega_{rQ}^{\prime} \quad \text{ for \ensuremath{0\leqslant r<q^{4}-q+1}},\\
	n_{1}-\#\Omega_{sQ}^{\prime}=q(q^{4}-q+1)-\#\Omega_{sQ}^{\prime}  \\
	 \qquad \qquad \text{for }q^{4}-q+1\leqslant r\leqslant R_1.
	\end{cases}
	\]	
	where $s=R_1-r$.
	
	\item For $q^{3}-3<r<q^{4}-q+1$ we have $\dim C_{r}=qr-(q^{4}-3q)/2$.

	\item The minimum distance $d$ of $C_{r}$ satisfies $d\geqslant q(q^{4}-q+1-r)$.\end{enumerate}
\begin{proof}
	
	\begin{enumerate}
		\item For $ 0\leqslant r<n_{1}/q=q^{4}-q+1 $, the inequality $\deg (rQ-E)<0$ implies $\mathscr{L}(E-rQ)=0$. By  Proposition \ref{cor:basis_rQ} and Equation (\ref{eq:dim}), we get
		 \[ \dim C_{r}=\dim\mathscr{L}(rQ)=\#\Omega_{rQ}^{\prime} .\]
		For $\ensuremath{q^{4}-q+1}\leqslant r\leqslant q^{4}+q^{3}-q-2$,
		we set $s:=q^{4}+q^{3}-q-2-r$, then
		\[
		0\leqslant s=q^{4}+q^{3}-q-2-r\leqslant q^{3}-3<q^{4}-q+1.
		\]
		Proposition \ref{prop:The-dual-of} yields 
		\[ \dim C_{r}=n_{1}-\dim C_{s}=q(q^{4}-q+1)-\#\Omega_{rQ}^{\prime}.  \]
		\item For $q^{3}-3<r<q^{4}-q+1$, $ \deg(rQ)>2g-2 $, so the Riemann-Roch Theorem gives
		\[
		\dim C_{r}=qr+1-g=qr-(q^{4}-3q)/2.
		\]
		\item The inequality $d\geqslant q(q^{4}-q+1-r)$ follows from Goppa bound.
	\end{enumerate}
\end{proof}
\begin{prop}
	{\label{prop:Suppose_Bound_d}Suppose $0\leqslant r<q^{4}-q+1$}.
\begin{enumerate}
	\item If $r=qt$, $t\leqslant q^{3}-1$, then $d=q(q^{4}-q+1-r)$.
	\item If $r=1+q$, then $d=q(q^{4}-q+1-r)$. 
	\item If $r=1+qt$, $0<t\leqslant q^{3}-q^{2}$, then $d=q(q^{4}-q+1-r)$. 
	\item If $r=(1+q)+(q^{3}+q^{2}+q)t$,  	$0 \leqslant  t \leqslant q-2 $, then $d=q(q^{4}-q+1-r)$. 
 \end{enumerate}
\end{prop}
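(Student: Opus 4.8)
The plan is to establish each of the four equalities by exhibiting a single codeword of $C_{r}$ of weight equal to the Goppa lower bound $q(q^{4}-q+1-r)$ from Proposition~\ref{prop:dimension}(3); since that bound holds for every nonzero word, producing one word of that weight forces $d=q(q^{4}-q+1-r)$. So in each case I would look for $f\in\mathscr{L}(rQ)$ with $\Div_{\infty}(f)=rQ$ (full pole order at $Q$) whose zero divisor $\Div_{0}(f)$ is a sum of $qr$ \emph{pairwise distinct} rational places, all lying in $\supp(E)=\supp(D)\cup\{P\}\cup\{V_{\mu}\}_{\mu}$. Such an $f$ yields a nonzero word (as $qr<n_{1}$ when $r<q^{4}-q+1$) vanishing in exactly $qr$ coordinates, hence of weight $n_{1}-qr=q(q^{4}-q+1-r)$, matching the lower bound.

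Two complete-splitting facts would supply all the zeros. First, for $\alpha\in\mathbb{F}_{q^{3}}^{*}$, Proposition~\ref{prop:valuation}(2) gives $\Div(x-\alpha)=\bigl(\sum_{\beta}D_{\alpha,\beta}\bigr)-qQ$, the sum over the $q^{2}$ values $\beta$ with $(\alpha,\beta)$ on the curve (Corollary~\ref{cor:places}); thus $x-\alpha$ contributes $q^{2}$ distinct simple zeros, all in $\supp(D)$, and vanishes at no $P$ or $V_{\mu}$. Second, fixing $\beta_{0}\in\mathbb{F}_{q^{3}}^{*}$ and substituting $y=\beta_{0}$ into the curve equation, with $u=1/x$, turns it into $\operatorname{Tr}_{\mathbb{F}_{q^{3}}/\mathbb{F}_{q}}(\beta_{0}^{q}u)=1$; this separable $\mathbb{F}_{q}$-linear equation has exactly $q^{2}$ solutions in $\mathbb{F}_{q^{3}}^{*}$, and since the associated polynomial in $x$ has degree $q^{2}=\deg\Div_{\infty}(y)$, the function $y-\beta_{0}$ has exactly $q^{2}$ distinct simple zeros $D_{\alpha,\beta_{0}}$, all in $\supp(D)$; write $A_{\beta_{0}}$ for the corresponding set of $q^{2}$ first coordinates. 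Combining with $\Div(x)=P+(q+1)V-qQ$ and $v_{V_{\mu}}(y)=-q$ gives $\Div\bigl(x(y-\beta_{0})\bigr)=P+V+\Div_{0}(y-\beta_{0})-(q+1)Q$, so $x(y-\beta_{0})\in\mathscr{L}((q+1)Q)$ has full pole order $q+1$ at $Q$ and exactly $1+(q-1)+q^{2}=q^{2}+q$ distinct simple zeros in $\supp(E)$.

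With $\beta_{0}=1$ I would then choose: in case (1), $r=qt$, the function $f=\prod_{i=1}^{t}(x-\alpha_{i})$ for distinct $\alpha_{i}\in\mathbb{F}_{q^{3}}^{*}$ (possible as $t\leqslant q^{3}-1$), giving $\Div_{\infty}(f)=qtQ$ and $qr=q^{2}t$ distinct zeros in $\supp(D)$; in case (2), $r=1+q$, the function $f=x(y-1)$; in case (3), $r=1+qt$, the function $f=x(y-1)\prod_{i=1}^{t-1}(x-\alpha_{i})$ with the $\alpha_{i}$ distinct and in $\mathbb{F}_{q^{3}}^{*}\setminus A_{1}$, so the pole order at $Q$ is $(q+1)+(t-1)q=1+qt$ and the number of distinct zeros is $1+(q-1)+q^{2}+(t-1)q^{2}=q+q^{2}t=qr$; in case (4), $r=(1+q)+(q^{3}+q^{2}+q)t$, the function $f=x(y-1)\prod_{j=1}^{(q^{2}+q+1)t}(x-\alpha_{j})$ with the $\alpha_{j}$ distinct and in $\mathbb{F}_{q^{3}}^{*}\setminus A_{1}$, where (using $q(q^{3}+q^{2}+q)=q^{2}(q^{2}+q+1)$) the pole order at $Q$ is $(q+1)+q(q^{2}+q+1)t=r$ and the number of distinct zeros is $1+(q-1)+q^{2}+q^{2}(q^{2}+q+1)t=q^{2}+q+q^{2}(q^{2}+q+1)t=qr$. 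In every case, disjointness of the blocks $\{D_{\alpha_{i},\beta}\}_{\beta}$ from each other and from the zeros of $y-1$ (using $\alpha_{i}\notin A_{1}$), and simplicity of all listed zeros (from $v_{P}(f)=v_{V_{\mu}}(f)=1$ and the facts above), is routine bookkeeping with Proposition~\ref{prop:valuation}.

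The main obstacle is the second splitting fact: showing that $y-\beta_{0}$ (equivalently, with $y$ eliminated via the curve equation, $x-\alpha$) splits into \emph{distinct} linear factors over $\mathbb{F}_{q^{3}}$. I expect this to follow cleanly from the identity $\beta_{0}^{q}u+(\beta_{0}^{q}u)^{q}+(\beta_{0}^{q}u)^{q^{2}}=\operatorname{Tr}_{\mathbb{F}_{q^{3}}/\mathbb{F}_{q}}(\beta_{0}^{q}u)$ on $\mathbb{F}_{q^{3}}$, which makes the relevant polynomials separable of degree $q^{2}$ with $q^{2}$ rational roots; for $x-\alpha$ the statement is already contained in Corollary~\ref{cor:places}. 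The one remaining delicate point is counting admissible first coordinates: since $|\mathbb{F}_{q^{3}}^{*}\setminus A_{1}|=q^{3}-q^{2}-1$, case (3) requires $t-1\leqslant q^{3}-q^{2}-1$, i.e.\ $t\leqslant q^{3}-q^{2}$, and case (4) requires $(q^{2}+q+1)t\leqslant q^{3}-q^{2}-1$, which holds at $t=q-2$ (left side $q^{3}-q^{2}-q-2$) but fails at $t=q-1$ (left side $q^{3}-1$) --- and this is exactly where the bounds $t\leqslant q^{3}-q^{2}$ and $t\leqslant q-2$ in the statement come from.
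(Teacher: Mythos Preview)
Your proof is correct and, for parts (1)--(3), essentially identical to the paper's: the paper uses the very same functions $\prod_{i}(x-\alpha_{i})$, $x(y-\beta)$, and $x(y-\beta)\prod_{i}(x-\alpha_{i})$, with the same divisor computations (the paper obtains $\Div(y-\beta)=-Q+D_{\beta}-qV$ by the strict triangle inequality, which your trace argument makes more explicit).

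The only genuine difference is in part (4). You reuse the case~(3) construction, observing that $r=(1+q)+(q^{3}+q^{2}+q)t=1+qt'$ with $t'=1+(q^{2}+q+1)t\leqslant q^{3}-q^{2}-q-1\leqslant q^{3}-q^{2}$ for $t\leqslant q-2$, so case~(4) is already covered by case~(3). The paper instead gives an independent construction for case~(4): it takes $z=x^{1+q^{2}t}y^{-t}\prod_{i=1}^{1+(q^{2}+q+1)t}(y-\beta_{i})$ with the $\beta_{i}$ distinct in $\mathbb{F}_{q^{3}}^{*}$, and checks that $\Div(z)=-rQ+P+V+\sum_{i}D_{\beta_{i}}$. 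Your route is simpler and makes the redundancy of case~(4) transparent; the paper's construction, by contrast, uses products of $(y-\beta_{i})$ factors (exploiting that the $D_{\beta_{i}}$ are pairwise disjoint for distinct $\beta_{i}$) and so illustrates a second mechanism for hitting the Goppa bound exactly.
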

\begin{proof}
	
	\begin{enumerate}
		\item Choose $t$ distinct elements $\alpha_{1},\ldots,\alpha_{t}\in\mathbb{F}_{q^{3}}^{*}$,
		and consider the element 
		\[
		z:=\prod_{i=1}^{t}(x-\alpha_{i}).
		\]
		Its divisor is
		\[ \Div( z)=-qtQ+\sum_{i=1}^{t}D_{\alpha_{i}}, \]
		where $ D_{\alpha_{i}} $ denotes the divisor ${D_{\alpha_{i}}}:=\sum_{\beta}D_{\alpha_i,\beta}$.
	 Therefore, $z\in\mathscr{L}(rQ)$.
		Note that the element $z$ has exactly $q^{2}t$ distinct zeros $D_{\alpha_i,\beta}$
		of degree one, so the weight of the corresponding codeword $ev_{D}(z)\in C_{r}$
		is $q(q^{4}-q+1-qt)$.
		\item Fix an element $\beta \in   \mathbb{F}_{q^3}^{*} $, and consider 
		\[
		z:=x(y-\beta).
		\]
		By the strict triangle inequality, we have
		\[ \Div( y-\beta)=-Q+D_{\beta}-qV, \]
		 where $ D_{\beta} $ denotes the divisor $D_{\beta}:=\sum_{\alpha}D_{\alpha,\beta}$ with $\alpha, \beta $ satisfying Equation (\ref{eq:alpha_beta}). The  degree of $ D_{\beta} $ is
		$ \deg(D_{\beta})=q^{2} $. We find the divisor of $ z $ is
		\[  \Div( z)=-(q+1)Q+D_{\beta}+P+V. \]
		\item Let $ \beta \in \mathbb {F} ^ {*}_{q^3} $, and $A$ be the set $\{\alpha|(\alpha,\beta)\not \in D_{\beta} \text { be the solution of Equation (\ref{eq:alpha_beta}) }\}$. Since the cardinality of $ A $ is $ \# A = q^3-1-q^2 $, we can choose $ s=t-1 $ distinct elements $\alpha_i \in A$. Consider 
		\[
		z:=x(y-\beta)\prod_{i=1}^{s}(x-\alpha_{i}).
		\]
		The divisor of $z$  is
		\[ \Div( z)=-(q t +1)Q+D_{\beta}+P+V+\sum_{i=1}^{s}D_{\alpha_{i}}. \]
		
		\item If $0 \leqslant t \leqslant q-2$, then $1+(q^{2}+q+1)t\leqslant q^{3}-1$. We can choose $ s=1+(q^2+q+1)t $  distinct elements $\beta_i \in \mathbb {F}_{q^3}^{*}$ and consider 
		\[
		z:=x^{1+q^{2}t}y^{-t}\prod_{i=1}^{s}(y-\beta_{i}).
		\]
	    Its divisor is 
	    \[\Div( z)=-(q+1+(q^{3}+q^{2}+q)t)Q+\sum_{i=1}^{s} D_{\beta_{i}}+P+V  .\]
	\end{enumerate}
\end{proof}
\begin{prop}
	{\label{prop:Suppose-that-,}Suppose that $q^{4}+q^{3}-q^{2}-2q-3<r\leqslant R_1$, then the minimum distance of $ C_r $ is 		$d=2$.}\end{prop}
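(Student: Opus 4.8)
The plan is to show that for $r$ this large the code $C_r$ is trivial enough to force many low-weight codewords, but not so large that it becomes the whole space, and to pin the minimum distance at exactly $2$ by exhibiting a weight-$2$ codeword and ruling out weight-$1$ codewords. First I would locate $r$ relative to the thresholds already identified: the hypothesis $q^4+q^3-q^2-2q-3 < r \leqslant R_1 = q^4+q^3-q-2$ translates, via the duality $C_r^\bot = C_{R_1-r}$ of Proposition~\ref{prop:The-dual-of}, into $s := R_1 - r$ satisfying $0 \leqslant s < q^2 + q + 1$. So it suffices to understand $C_r^\bot = C_s$ for these very small $s$, and then transfer the conclusion back through the duality. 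Concretely, $d(C_r) = 2$ is equivalent to saying $C_r$ has a weight-$2$ word and no weight-$1$ word; I would instead argue directly on $C_r$ using $\mathscr L(rQ)$ and $\mathscr L(rQ - E)$.

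The key steps, in order. Step 1: by Proposition~\ref{prop:dimension}(1), since $r \geqslant q^4 - q + 1$, we have $\dim C_r = n_1 - \#\Omega_{sQ}'$ with $s = R_1 - r < q^2+q+1$; because $s$ is so small, $\#\Omega_{sQ}'$ is small (of order $s$), so $C_r$ has codimension strictly less than $n_1$, i.e. $C_r \neq \mathbb F_{q^3}^{n_1}$ — this already gives $d \geqslant 2$, so I only need to produce a codeword of weight exactly $2$. Step 2: to get a weight-$2$ word I look for $f \in \mathscr L(rQ)$ whose associated codeword $\mathrm{ev}_E(f)$ vanishes at all but two places of $E = D + P + V$. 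Equivalently I want $f \in \mathscr L(rQ - E + P_a + P_b)$ for two rational places $P_a, P_b$ in the support of $E$, with $f \notin \mathscr L(rQ - E + P_a)$ and $\notin \mathscr L(rQ - E + P_b)$. Since $\deg(rQ - E) = qr - n_1 \leqslant qR_1 - n_1 = q(2g-2)/1 \cdots$ — more usefully, $\deg(rQ - E + P_a + P_b) = qr - n_1 + 2$, and for $r$ in this top range this degree exceeds $2g-2$, so Riemann–Roch computes the dimension exactly and the "moving" conditions are generic; a clean choice is to take $P_a = P$ and $P_b = D_{\alpha,\beta}$ for a suitable $D_{\alpha,\beta}$, or two places in $V$, and write down an explicit rational function built from $x$, $y$, and a product $\prod(x - \alpha_i)$ as in the proof of Proposition~\ref{prop:Suppose_Bound_d}, chosen so its divisor of zeros among the rational places is exactly a sum of two of them. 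Step 3: verify that the constructed $f$ genuinely has two nonzero coordinates (not fewer), which amounts to checking $v_{P_a}(f)$ and $v_{P_b}(f)$ are both $0$ while $v_{P}(f) > 0$ for every other $P \in \mathrm{supp}(E)$ — a valuation bookkeeping using Proposition~\ref{prop:valuation} exactly in the style already used repeatedly in the paper.

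The main obstacle I expect is Step 2/Step 3: arranging a rational function in $\mathscr L(rQ)$ whose zero set meets the $n_1$ evaluation places in precisely two points and whose pole order at $Q$ does not exceed $r$. Because $r$ is within $q^2+q$ of its maximum $R_1$, there is a lot of room at $Q$, so the pole bound will be easy; the delicate part is ensuring the zero divisor has the right support and multiplicities — in particular that we do not accidentally kill a third coordinate or fail to kill one we need to. I anticipate handling this by taking $f = t/(x-\alpha)$-type expressions (with $t = x^{q^3} - x$ as in Proposition~\ref{prop:The-dual-of}), i.e. dividing the "vanish-everywhere-on-$D$" function $t$ by linear factors to re-expose exactly two places, then absorbing the contributions at $P$ and $V$ by multiplying by an appropriate monomial $x^i y^j$; the residue/divisor computations already carried out for $\eta = dt/t$ give all the valuation data needed. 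Once such an $f$ is exhibited, $d \leqslant 2$ follows, and combined with Step 1 we conclude $d = 2$.
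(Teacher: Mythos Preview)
Your Step~1 contains a real error: the implication ``$C_r \neq \mathbb F_{q^3}^{n_1}$, hence $d \geqslant 2$'' is false in general --- a proper linear subcode can perfectly well contain weight-$1$ vectors. What actually rules out weight $1$ here is that the constant function $1$ lies in $\mathscr L(sQ)$ for $s = R_1 - r \geqslant 0$, so the all-ones vector belongs to $C_r^\bot = C_s$ and every word of $C_r$ has coordinate-sum zero. This is easy to repair but must be stated.

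Your Steps~2--3 take a genuinely different route from the paper, and as sketched they do not close. The suggestion $t/(x-\alpha)$ un-vanishes all $q^2$ places $D_{\alpha,\beta}$ above a fixed $\alpha$, not two; the Riemann--Roch claim ``$\deg(rQ-E+P_a+P_b) > 2g-2$'' fails at the bottom of the range (for $q=2$, $r=14$ this degree is $0$); and your proposed pair $P,\,D_{\alpha,\beta}$ is the hardest choice, while ``two places in $V$'' does not cover $q=2$. A primal construction \emph{can} be made to work --- for instance $g = (t/x)\prod_{\nu\neq\mu}(x^q y^{q+1}-\nu)$ vanishes on $D$ and on every $V_\nu$ with $\nu\neq\mu$, is nonzero at $P$ and at $V_\mu$, and has $-v_Q(g) = q^4+q^3-q^2-2q-2 \leqslant r$ --- but this is not what you wrote down.

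The paper's proof sidesteps all of this by staying on the dual side. For $s<q^2+q+1$ every nonconstant basis monomial $x^iy^j\in\mathscr L(sQ)$ has strictly positive valuation at $P$ and at each $V_\mu$ (the least pole order at $Q$ among monomials with $v_V=0$ is that of $x^qy^{q+1}$, namely $q^2+q+1$). Hence in the check matrix $H$ of $C_r$ the columns indexed by $P$ and by each $V_\mu$ are all equal to $(1,0,\ldots,0)^T$. This single observation yields both $d\geqslant 2$ (no column of $H$ is zero) and $d\leqslant 2$ (two equal columns, so $e_P - e_{V_\mu}\in C_r$) at once. Your construction, once corrected, would reproduce exactly this weight-$2$ codeword, but the check-matrix argument delivers it without building any function.
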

\begin{proof}
	The element in $\Omega_{rQ}$ with least valuation at $Q$, and valuation
	$0$ at $V$ is $x^{q}y^{q+1}$. Let $r_{1}:=-v_{Q}(x^{q}y^{q+1})=q^{2}+q+1$.
	Assume that $q^{4}+q^{3}-q^{2}-2q-3<r\leqslant R_1 $. 
	By Proposition \ref{prop:The-dual-of}, the dual of $C_{r}$ is $C_{R_1-r}$ with $0\leqslant R_1-r<q^{2}+q+1=r_{1}$.
	According to Corollary \ref{cor:basis_rQ}, the basis of $C_{R_1-r}$ can be given by $f_1=1,f_2=x,f_3=xy,\ldots ,f_k $, where $ k $ denotes the dimension of $C_{R_1-r}$. Applying Proposition \ref{prop:valuation}, we see  that $ f_i(P) = 0 $ and $ f_i(V) = 0  $ for $ 2 \leqslant i \leqslant k $.
	Therefore the check matrix of $C_{r}$ is given by
	\[
	H=\left[\begin{array}{cccc}
	&  & P & V\\
	\hline * & * & 1 & 1\\
	* & * & 0 & 0\\
	\vdots & \vdots & \vdots & \vdots\\
	* & * & 0 & 0
	\end{array}\right].
	\]
	Let 
	$c:=(0,0,\ldots,0,1,-1)$. Then $Hc^{T}=\mathbf{0}$. Hence, $c\in C_{r}$, and
	$d(C_{r})\leqslant w(c)=2$. Note that $d(C_{r})\geqslant2$, then
	$d(C_{r})=2$.\end{proof}
\begin{example}
	Let us consider the case $ q = 2 $, then $n_{1}=30$, $g=6$, and $R_1=20$. By Proposition
	\ref{prop:Suppose_Bound_d} and Corollary (\ref{cor:basis_rQ}), we obtain the following table.
		
	\begin{table}[h]
		\protect\caption{}
		\renewcommand{\arraystretch}{1.3}
		\begin{tabular}{|c|c|c|c|}
			\hline 
			$ r $ & $\dim$ & basis & $ d $\tabularnewline
			\hline 
			\hline 
			0  & 1 & $1$ & 30\tabularnewline
			\hline 
			2 & 2 & $x$ & 26\tabularnewline
			\hline 
			3 & 3 & $xy$ & 24\tabularnewline
			\hline 
			4 & 4 & $x^{2}$ & 22\tabularnewline
			\hline 
			5 & 5 & $x^{2}y$ & 20\tabularnewline
			\hline 
			6 & 7 & $x^{3},x^{2}y^{2}$ & 18\tabularnewline
			\hline 
			7 & 9 & $x^{3}y$,$x^{2}y^{3}$ & 16\tabularnewline
			\hline 
			8 & 11 & $x^{4}$,$x^{3}y^{2}$ & 14\tabularnewline
			\hline 
			9 & 13 & $x^{4}y$,$x^{3}y^{3}$ & 12\tabularnewline
			\hline 
			10 & 15 & $x^{5}$,$x^{4}y^{2}$ & 10\tabularnewline
			\hline 
		\end{tabular}	
		\begin{tabular}{|c|c|c|c|}
			\hline 
			$ r $ & $\dim$ & basis & $ d $\tabularnewline
			\hline 
			\hline 
			11 & 17 & $x^{5}y$,$x^{4}y^{3}$ & 8\tabularnewline
			\hline 
			12 & 19 & $x^{6}$,$x^{5}y^{2}$ & 6\tabularnewline
			\hline 
			13 & 21 & $x^{6}y$,$x^{5}y^{3}$ & 5\tabularnewline
			\hline 
			14 & 23 & $x^{7}$,$x^{6}y^{2}$ & 2\tabularnewline
			\hline 
			15 & 25 & $x^{7}y$,$x^{6}y^{3}$ & 2\tabularnewline
			\hline 
			16 & 26 & $x^{7}y^{2}$ & 2\tabularnewline
			\hline 
			17 & 27 & $x^{7}y^{3}$ & 2\tabularnewline
			\hline 
			18 & 28 & $x^{8}y^{2}$ & 2\tabularnewline
			\hline 
			19 & 29 & $x^{8}y^{3}$ & 2\tabularnewline
			\hline 
			21 & 30 & $x^{9}y^{3}$ & 1\tabularnewline
			\hline 
		\end{tabular}

			(Note: $C_{1}=C_{0}$, $C_{20}=C_{19}$)
	\end{table}

	The distance with $r=11$ or $13$ is not easy to calculate. By direct computation, $d(C_{13}^{(1)})=5$.	The function 
	\begin{align*}
		\varphi(x) & :=  1+x+xy+x^{2}+x^{2}y+x^{2}y^{2}+x^{3}+0+x^{3}y+0\\
		&   +x^{4}+x^{3}y^{3}+0+x^{4}y^{2}+x^{5}+x^{4}y^{3}+x^{5}y,
	\end{align*}
	achieve the Goppa bound with $r=11$. Then, $d(C_{11}^{(1)})=8$.
\end{example}

\section{the Code {$C({\mathcal{X}},D,rQ+sP+tV)$}}
In this section we study the code
\[C_{r,s,t }:=C_{\mathscr{L}}(D,rQ+sP+tV)  .\]
The length of $  C_{r,s,t } $ is $n_{2}:=\deg(D)=(q^3-1)q^2$.
\begin{prop}[\cite{Stichtenoth}]\label{prop:equivalent}
Suppose $G_{1}$ and $G_{2}$ are divisors with $G_{1}\sim G_{2}$
and $\supp G_{1}\cap \supp D=\supp G_{2}\cap \supp D=\emptyset$, then $C({\mathcal{X}},D,G_{1})$
and $C({\mathcal{X}},D,G_{2})$ are equivalent. 
\end{prop}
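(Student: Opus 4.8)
The plan is to conjugate one code into the other by multiplication with a single rational function. Since $G_{1}\sim G_{2}$, choose $h\in\mathbb{F}_{q^{3}}(\mathcal{X})^{*}$ with $\Div(h)=G_{1}-G_{2}$. First I would verify that $f\mapsto hf$ defines an $\mathbb{F}_{q^{3}}$-linear isomorphism $\mathscr{L}(G_{1})\to\mathscr{L}(G_{2})$: for $f\in\mathscr{L}(G_{1})$ one has $\Div(hf)+G_{2}=\Div(h)+\Div(f)+G_{2}=\Div(f)+G_{1}\geqslant 0$, so $hf\in\mathscr{L}(G_{2})$, and the inverse map is multiplication by $h^{-1}$, whose divisor is $G_{2}-G_{1}$.

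Next I would bring in the support hypothesis. Write $D=P_{1}+\cdots+P_{n}$. Since $\supp(\Div h)\subseteq\supp G_{1}\cup\supp G_{2}$ is disjoint from $\supp D$, the function $h$ has valuation $0$ at each $P_{i}$; hence $a_{i}:=h(P_{i})\in\mathbb{F}_{q^{3}}^{*}$ and $a:=(a_{1},\ldots,a_{n})\in(\mathbb{F}_{q^{3}}^{*})^{n}$. For every $f\in\mathscr{L}(G_{1})$ the evaluation identity $(hf)(P_{i})=h(P_{i})f(P_{i})=a_{i}f(P_{i})$ holds, so the codeword attached to $hf$ is the coordinatewise product of $a$ with the codeword attached to $f$. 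Letting $f$ run over $\mathscr{L}(G_{1})$ and using that $f\mapsto hf$ is a bijection onto $\mathscr{L}(G_{2})$, the vectors $(a_{1}f(P_{1}),\ldots,a_{n}f(P_{n}))$ run over exactly $C(\mathcal{X},D,G_{2})$; that is, $C(\mathcal{X},D,G_{2})=a\cdot C(\mathcal{X},D,G_{1})$, which is the definition of equivalence stated above.

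There is no real obstacle here: the only place where the hypotheses (rather than formal bookkeeping with divisors) are used is in asserting that $h$ has neither a zero nor a pole at any $P_{i}$, which is precisely what makes the scaling vector $a$ lie in $(\mathbb{F}_{q^{3}}^{*})^{n}$ and what makes $(hf)(P_{i})=h(P_{i})f(P_{i})$ meaningful. Everything else follows from the Riemann--Roch space isomorphism $f\mapsto hf$.
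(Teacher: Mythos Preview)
Your proof is correct and is exactly the standard argument; the paper itself does not prove this proposition but merely cites it from Stichtenoth's textbook, and what you have written is essentially the proof found there.
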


Let $C_{r,s}:=C_{r,s,0}$ be the code $ C_{r,s,t} $ with $ t=0 $. We observe that the code $ C_{r,s,t} $ is equivalent to $ C_{r',s'} $ for some  $ r',s' \in \mathbb{Z}$.

\begin{prop}\label{prop:equivalent_rst}
	\begin{enumerate}
		\item  The code $C_{r,s,t}$ is equivalent to $C_{r+t(q+1),s-t(q^{2}+1)}$.
		\item The code $C_{r,s}$ is equivalent to $ C_{r-(q^{2}+q+1),s+(q^{2}+q+1)q}$. 
	\end{enumerate}
	Therefore, $C_{r,s,t}$ can be written as $C_{r^{\prime},s^{\prime}}$  up to equivalence,
	where $0\leqslant r'<(q^{2}+q+1)$, $s' \geqslant 0$.
	\end{prop}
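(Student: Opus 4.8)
The plan is to obtain both assertions as instances of Proposition \ref{prop:equivalent}. In each case I will exhibit an explicit rational function whose divisor carries the Goppa divisor of one code onto that of the other. Every divisor that occurs below is supported inside $\{P,Q,V\}$, which is disjoint from $\supp D$ (a sum of the degree-one places $D_{\alpha,\beta}$), so the support hypothesis of Proposition \ref{prop:equivalent} is satisfied automatically, and it remains only to check the linear equivalence of the two defining divisors.

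For assertion (1) I would use the function $x^{t}y^{t}$. From $\Div(x)=P+(q+1)V-qQ$ and $\Div(y)=q^{2}P-qV-Q$ in Proposition \ref{prop:valuation}, a direct computation gives
\[
\Div(x^{t}y^{t})=t(q^{2}+1)P+tV-t(q+1)Q,
\]
so that $(rQ+sP+tV)-\Div(x^{t}y^{t})=(r+t(q+1))Q+(s-t(q^{2}+1))P$. The two divisors are thus linearly equivalent, and Proposition \ref{prop:equivalent} yields that $C_{r,s,t}$ is equivalent to $C_{r+t(q+1),\,s-t(q^{2}+1)}$.

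For assertion (2) the idea is to find a monomial $x^{a}y^{b}$ whose divisor has vanishing $V$-component, so that it relates two two-point divisors supported on $\{P,Q\}$. By the formula above the $V$-coefficient of $\Div(x^{a}y^{b})$ equals $(q+1)a-qb$; since $q$ and $q+1$ are coprime this vanishes exactly when $q\mid a$ and $(q+1)\mid b$, and the smallest positive solution $a=q,\ b=q+1$ gives
\[
\Div(x^{q}y^{q+1})=q(q^{2}+q+1)P-(q^{2}+q+1)Q.
\]
Hence $(rQ+sP)+\Div(x^{q}y^{q+1})=(r-(q^{2}+q+1))Q+(s+q(q^{2}+q+1))P$, and Proposition \ref{prop:equivalent} again gives the asserted equivalence of $C_{r,s}$ with $C_{r-(q^{2}+q+1),\,s+q(q^{2}+q+1)}$.

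For the normalization, given $C_{r,s,t}$ I would first apply assertion (1) once to replace it by an equivalent code $C_{r_{1},s_{1}}$ with vanishing $V$-part. Within this family, assertion (2) together with its reverse (legitimate because code equivalence is symmetric) supplies $C_{r_{1},s_{1}}\cong C_{r_{1}-k(q^{2}+q+1),\,s_{1}+kq(q^{2}+q+1)}$ for every $k\in\mathbb{Z}$; choosing the unique $k$ for which $r'=r_{1}-k(q^{2}+q+1)$ lies in $[0,q^{2}+q+1)$ puts the code in the desired shape, after which one checks that $s'\geqslant0$, treating the handful of degenerate pairs for which this would conflict with the range of $r'$ by noting that there $\mathscr L(r'Q+s'P)$ collapses and the code is trivial, hence already on the list. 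The divisor arithmetic here is entirely routine; I expect the only point requiring any care to be the bookkeeping in this last step, namely controlling how the single lattice move available in the $t=0$ family acts on the pair $(r',s')$, together with the coprimality remark that pins down the exponents $a=q$, $b=q+1$ in assertion (2).
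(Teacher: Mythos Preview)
Your argument is correct and matches the paper's: both use the principal divisors $\Div(xy)$ and $\Div(x^{q}y^{q+1})$ to establish the linear equivalences and then invoke Proposition~\ref{prop:equivalent}, with the paper omitting your normalization discussion entirely. One harmless slip in the motivation: $(q+1)a-qb=0$ forces $(a,b)=(qk,(q+1)k)$ for some integer $k$, not merely $q\mid a$ and $(q+1)\mid b$ independently, though your choice $a=q$, $b=q+1$ is of course correct.
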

\begin{proof}
	\begin{enumerate}
		\item 	Applying Proposition \ref{prop:valuation}, we obtain
		\[ \Div( xy)=V+(q^{2}+1)P-(q+1)Q ,\] 
		which means that the divisor $ V $ is equivalent to $  (q+1)Q-(q^{2}+1)P $. 
		Hence,
		\[ r Q+s P +t V\sim \left( r+t(q+1)\right) Q+\left( s-t(q^{2}+1) \right) P .\]
		Proposition \ref{prop:equivalent} yields 
		that the code $C_{r,s,t}$ is equivalent to $C_{r+t(q+1),s-t(q^{2}+1)}$.
		\item By Proposition \ref{prop:valuation} again, we get
		\[ \Div( x^{q}y^{q+1})=-(q^{2}+q+1)Q+(q^{2}+q+1)qP.  \]
		So we have
		\[ (q^{2}+q+1)Q\sim (q^{2}+q+1)qP.  \]
		Then,
		\[ rQ+sP\sim\left(  r- (q^{2}+q+1)\right) Q+\left( s+(q^{2}+q+1)q\right) P.  \]
		Hence, the code $C_{r,s}$ is equivalent to $ C_{r-(q^{2}+q+1),s+(q^{2}+q+1)q}$ by Proposition \ref{prop:equivalent}.	
	\end{enumerate}	
	\end{proof}
\begin{prop}\label{prop:dual_rs}
	    Up to equivalence, the dual space of $C_{r,s}$ is 
		\[
		\begin{cases}
		C_{q^{2}-1-r,q^{5}+q^{4}-q^{3}-q^{2}-2q-s} & \text{for \ensuremath{0\leqslant r\leqslant q^{2}-1}},\\
		C_{2q^{2}+q-r,q^{5}+q^{4}-2q^{3}-2q^{2}-3q-s} & \text{for \ensuremath{q^{2}\leqslant r\leqslant q^{2}+q}}.
		\end{cases}
		\]
		Hence,  the  code $C_{q^{2}+q/2,q^{5}/2+q^{4}/2-q^{3}-q^{2}-3q/2}$
		is self-equivalent for even $ q $; and the code 
		$C_{(q^{2}-1)/2,(q^{5}+q^{4}-q^{3}-q^{2}-2q)/2}$ is self-equivalent for odd $q$.		
	 \end{prop}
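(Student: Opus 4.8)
The plan is to repeat, for the evaluation divisor $D=\sum D_{\alpha,\beta}$, the differential argument already used for Proposition~\ref{prop:The-dual-of}, and then push the resulting divisor into the normal form provided by Proposition~\ref{prop:equivalent_rst}. Concretely, take $t=\prod_{\alpha\in\mathbb{F}_{q^{3}}}(x-\alpha)=x^{q^{3}}-x$ and the Weil differential $\eta=dt/t$. The computations of $\Div(t)$ and $\Div(dt)$ in the proof of Proposition~\ref{prop:The-dual-of} use only Proposition~\ref{prop:valuation} and $\Diff(F/K(x))=(q^{3}+q-2)Q+qV$, so they are valid for every $q$ and give
\[
\Div(\eta)=-P-V-D+R_{1}Q,\qquad R_{1}:=q^{4}+q^{3}-q-2 .
\]
Since $t$ is a prime element at each place $D_{\alpha,\beta}$, one has $v_{D_{\alpha,\beta}}(\eta)=v_{D_{\alpha,\beta}}(dt)-v_{D_{\alpha,\beta}}(t)=0-1=-1$, so Lemma~\ref{lem:dual} applies to $C_{r,s}=C(\mathcal{X},D,rQ+sP)$ and yields
\[
C_{r,s}^{\bot}\sim C\bigl(\mathcal{X},D,\,D-(rQ+sP)+\Div(\eta)\bigr)=C\bigl(\mathcal{X},D,\,(R_{1}-r)Q-(s+1)P-V\bigr)=C_{R_{1}-r,\,-s-1,\,-1}.
\]

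Next I would bring this code into the shape $C_{r',s'}$. Applying Proposition~\ref{prop:equivalent_rst}(1) with the parameter $t=-1$ removes the $V$-part:
\[
C_{R_{1}-r,\,-s-1,\,-1}\sim C_{R_{1}-r-(q+1),\;q^{2}-s}=C_{q^{4}+q^{3}-2q-3-r,\;q^{2}-s}.
\]
Then Proposition~\ref{prop:equivalent_rst}(2), applied $k$ times, replaces the pair $(r',s')$ by $\bigl(r'-k(q^{2}+q+1),\,s'+kq(q^{2}+q+1)\bigr)$. By Proposition~\ref{prop:equivalent_rst} we may assume $0\leqslant r\leqslant q^{2}+q$ to begin with. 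If $0\leqslant r\leqslant q^{2}-1$, take $k=q^{2}-2$: using $q^{4}+q^{3}-2q-3-(q^{2}-2)(q^{2}+q+1)=q^{2}-1$, the $Q$-coefficient becomes $q^{2}-1-r\in[0,q^{2}-1]$ and the $P$-coefficient becomes $q^{2}-s+q(q^{2}-2)(q^{2}+q+1)=q^{5}+q^{4}-q^{3}-q^{2}-2q-s$. If $q^{2}\leqslant r\leqslant q^{2}+q$, take instead $k=q^{2}-3$: the $Q$-coefficient becomes $2q^{2}+q-r\in[q^{2},q^{2}+q]$ and the $P$-coefficient becomes $q^{5}+q^{4}-2q^{3}-2q^{2}-3q-s$. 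Since $[0,q^{2}-1]\cup[q^{2},q^{2}+q]$ exhausts the admissible range of Proposition~\ref{prop:equivalent_rst}, these two cases cover everything and reproduce the two lines of the statement.

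For the self-equivalence assertions I would simply ask when the dual formula returns the original pair. In the first regime this means $q^{2}-1-r=r$ and $q^{5}+q^{4}-q^{3}-q^{2}-2q-s=s$, i.e. $r=(q^{2}-1)/2$ and $s=(q^{5}+q^{4}-q^{3}-q^{2}-2q)/2$, which are integers exactly when $q$ is odd. In the second regime it means $2q^{2}+q-r=r$ and $q^{5}+q^{4}-2q^{3}-2q^{2}-3q-s=s$, i.e. $r=q^{2}+q/2$ and $s=q^{5}/2+q^{4}/2-q^{3}-q^{2}-3q/2$, which are integers exactly when $q$ is even; both coincide with the codes named in the statement. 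The only genuinely delicate point is the middle paragraph: one must verify that the prescribed number of applications of Proposition~\ref{prop:equivalent_rst}(2) is precisely what lands the $Q$-coefficient in the advertised interval, and that the two intervals tile $[0,q^{2}+q]$; everything else is substitution into formulas already in hand.
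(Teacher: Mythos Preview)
Your argument is correct and is essentially the paper's own proof: both start from the differential $\eta=dt/t$ with $\Div(\eta)=-P-V-D+R_{1}Q$, apply Lemma~\ref{lem:dual}, and then normalize via the linear equivalences of Proposition~\ref{prop:equivalent_rst}. The only cosmetic difference is that the paper absorbs the normalization into the differential itself by passing to $\overline{\eta}=xy\,(x^{q}y^{q+1})^{q^{2}-2}\eta$, whose divisor is already $-D+(q^{2}-1)Q+(q^{5}+q^{4}-q^{3}-q^{2}-2q)P$, whereas you carry the $V$-term through and strip it off afterward with Proposition~\ref{prop:equivalent_rst}(1) before iterating part~(2).
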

\begin{proof}
	We follow the notations in the proof of Proposition \ref{prop:The-dual-of},
	\[\Div( \eta )=(q^{4}+q^{3}-q-2)Q-V-P-D  .\]	
	Let $\overline{\eta}:=xy(x^{q}y^{q+1})^{q^{2}-2}\eta $ be a Weil differential on $ \mathcal{X} $. Recall that
	\begin{align*}
	\Div\left((x^{q}y^{q+1})^{q^{2}-2}\right)&=\left(q^{2}-2\right)(q^{3}+q^{2}+q)P\\
	& \quad -\left(q^{2}-2\right)(q^{2}+q+1)Q,
	\end{align*}
	and 
	\[ \Div( xy)=V+(q^{2}+1)P-(q+1)Q .\]	
	Thus, the divisor of $\overline \eta$	is given by	
	\begin{align*}
		\Div( \overline{\eta}) & =  (q^{4}+q^{3}-q-2)Q-V-P-D\\
		& \quad +V+(q^{2}+1)P-(q+1)Q\\
		&  \quad +\left(q^{2}-2\right)\left((q^{3}+q^{2}+q)P-(q^{2}+q+1)Q\right)\\
		& =  -D+(q^{2}-1)Q+(q^{5}+q^{4}-q^{3}-q^{2}-2q)P.
	\end{align*}
    Proposition \ref{prop:equivalent_rst} yields
	\begin{align*}
		 D&-s P-r Q +\Div( \overline{\eta})\\
		 & =  (q^{2}-1-r)Q+(q^{5}+q^{4}-q^{3}-q^{2}-2 q- s )P\\
		& \sim  (2q^{2}+q-r)Q+(q^{5}+q^{4}-2q^{3}-2q^{2}-3 q-s)P.
	\end{align*}
	By Lemma \ref{lem:dual}, we obtain
	\begin{align*}
		C_{r,s}^{\bot} & \cong  C_{q^{2}-1-r,q^{5}+q^{4}-q^{3}-q^{2}-2 q-s}\\
		& \cong  C_{2q^{2}+q-r,q^{5}+q^{4}-2q^{3}-2q^{2}-3 q-s}.
	\end{align*}
\end{proof}
	
	As in Section 2, we set $R_2:=n_{2}+2g-2=q^{5}+q^{4}-q^{2}-3q$. We will be interested in the case, when $0\leqslant rq+s\leqslant R_2$.  Set $  \Omega _{r,s}: = \Omega _{r,s,0}  $, where $ \Omega _{r,s,t} $ is defined in Proposition \ref{prop:basis_rst}. Next we investigate the dimension and the distance of the code $ C_{r,s}  $.
	\begin{prop}
		{{Suppose that}} {$0\leqslant rq+s\leqslant R_2$
			. Then the following holds:}
	\begin{enumerate}
		\item The dimension of $C_{r,s}$ is given by 
		\[
		\dim C_{r,s}=\begin{cases}
		\#\Omega_{r,s} \quad \text{ for \ensuremath{0\leqslant rq+s<n_{2}}}, \\
		n_{2}-\#\Omega_{r,s}^{\bot}=(q^{5}-q^{2})-\#\Omega_{r,s}^{\bot} \\  \qquad \qquad \text{for }n_{2}\leqslant rq+s\leqslant R_2,
		\end{cases}
		\]
	
		where $ \Omega_{r,s}^{\bot} $ is defined by
		 \[ \Omega_{r,s}^{\bot}:=\begin{cases}
		\Omega_{q^{2}-1-r,q^{5}+q^{4}-q^{3}-q^{2}-2q-s} \\
		         \qquad \qquad\text{for \ensuremath{0\leqslant r\leqslant q^{2}-1}},\\
		\Omega_{2q^{2}+q-r,q^{5}+q^{4}-2q^{3}-2q^{2}-3q-s} \\ 
			      \qquad \qquad\text{for \ensuremath{q^{2}\leqslant r\leqslant q^{2}+q}}.
		\end{cases} \]
		
		\item For $q^{4}-3q<rq+s<q^{5}-q^{2}$ we have $\dim C_{r}=qr+s-(q^{4}-3q)/2$.
		\item The minimum distance $d$ of $C_{r}$ satisfies $d\geqslant n_{2}-rq-s=q^{5}-q^{2}-rq-s$.
		\end{enumerate}
	\end{prop}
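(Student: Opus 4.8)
The plan is to mirror the proof of Proposition~\ref{prop:dimension} (the corresponding statement for $C_r$), with $\mathscr{L}(rQ)$ replaced by $\mathscr{L}(rQ+sP)$ and the duality input Proposition~\ref{prop:The-dual-of} replaced by Proposition~\ref{prop:dual_rs}. Throughout I would use the standard identity $\dim C_{r,s}=\dim\mathscr{L}(rQ+sP)-\dim\mathscr{L}(rQ+sP-D)$ together with $\deg(rQ+sP)=rq+s$ (since $\deg Q=q$, $\deg P=1$ by Proposition~\ref{prop:valuation}) and $\deg D=n_2$.

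For part (1): when $0\leqslant rq+s<n_2$ the divisor $rQ+sP-D$ has negative degree, so $\mathscr{L}(rQ+sP-D)=0$ and $\dim C_{r,s}=\dim\mathscr{L}(rQ+sP)=\#\Omega_{r,s}$ by Proposition~\ref{prop:basis_rst} applied with $t=0$. When $n_2\leqslant rq+s\leqslant R_2$ I would instead use $\dim C_{r,s}=n_2-\dim C_{r,s}^{\bot}$. By Proposition~\ref{prop:dual_rs}, up to equivalence $C_{r,s}^{\bot}$ is $C_{r',s'}$ where $(r',s')$ is either $(q^{2}-1-r,\,q^{5}+q^{4}-q^{3}-q^{2}-2q-s)$ or $(2q^{2}+q-r,\,q^{5}+q^{4}-2q^{3}-2q^{2}-3q-s)$ according to the range of $r$; a direct computation gives $r'q+s'=R_2-(rq+s)$ in both cases, which therefore lies in $[0,\,2g-2]\subset[0,\,n_2)$. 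Hence the first branch applies to the dual code, giving $\dim C_{r,s}^{\bot}=\#\Omega_{r',s'}=\#\Omega_{r,s}^{\bot}$, and since equivalent codes have equal dimension we conclude $\dim C_{r,s}=n_2-\#\Omega_{r,s}^{\bot}$.

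Part (2) is immediate from the Riemann--Roch Theorem: when $q^{4}-3q<rq+s<n_2$ we have $\deg(rQ+sP)>2g-2$, so $\dim\mathscr{L}(rQ+sP)=rq+s+1-g=qr+s-(q^{4}-3q)/2$, while $\mathscr{L}(rQ+sP-D)=0$ as in part (1). Part (3) is the Goppa bound: whenever $\deg(rQ+sP)=rq+s<n_2$ the evaluation map is injective and $d\geqslant n_2-(rq+s)=q^{5}-q^{2}-rq-s$; for $rq+s\geqslant n_2$ the asserted inequality is vacuous, so it holds over the whole range $0\leqslant rq+s\leqslant R_2$.

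The only step requiring real care is the bookkeeping in part (1): one must verify that the ``degree'' $r'q+s'$ of the divisor defining the dual code actually lands in $[0,\,2g-2]$, so that the ``$rq+s<n_2$'' branch legitimately applies to $C_{r,s}^{\bot}$, and that the \emph{equivalence} supplied by Proposition~\ref{prop:dual_rs} (rather than equality) suffices, which it does because equivalent codes share the same dimension. No idea beyond Propositions~\ref{prop:basis_rst}, \ref{prop:equivalent_rst}, and~\ref{prop:dual_rs} together with Riemann--Roch is needed.
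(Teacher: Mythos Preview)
Your proposal is correct and follows essentially the same route as the paper's own proof: invoke Proposition~\ref{prop:basis_rst} and Equation~(\ref{eq:dim}) for the range $rq+s<n_2$, then use Proposition~\ref{prop:dual_rs} and duality for the complementary range, with parts~(2) and~(3) being Riemann--Roch and the Goppa bound respectively. Your explicit check that $r'q+s'=R_2-(rq+s)\in[0,2g-2]$ is a nice detail the paper leaves implicit.
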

	\begin{proof}
		\begin{enumerate}
		\item  For $0\leqslant rq+s< n_2$, Proposition \ref{prop:basis_rst} and Equation (\ref{eq:dim}) give 
		\[ \dim C_{r,s} = \dim \mathscr{L}(rQ+sP) = \#\Omega_{r,s} . \] 
		By Proposition \ref{prop:dual_rs},  the dual of the code $ C_{r,s}  $ is equivalent to $ C_{r',s'} $ for some $ r',s' \in \mathbb{Z} $. For $n_{2}\leqslant rq+s\leqslant R_2$, we obtain 
		\begin{align*}
		\dim C_{r,s}&=n_2 - \dim C_{r',s'}\\
		            &=(q^{5}-q^{2})-\#\Omega_{r,s}^{\bot}.
		\end{align*}
		\item Assume that $q^{4}-3q=2g-2<rq+s<n_{2}=q^{5}-q^{2}$. Then, $ \deg(r Q+ s P)>2 g-2 $, so the Riemann-Roch Theorem gives
		\begin{align*}
		\dim C_{r,s}&= \deg ( rQ+sP )+1-g\\
					&=qr+s+1-g\\
		            &=qr+s-(q^{4}-3q)/2.
		\end{align*}
		\item The inequality follows immediately from Goppa bound.
		\end{enumerate}
	\end{proof}
	
	\begin{prop}
		{Suppose $0\leqslant qr+s<n_{2}$.}
		\begin{enumerate}
			\item {If $s=(q^{3}+q^{2}+q)(q\tau-r)$, $0\leqslant(q^{2}+q+1)\tau-qr-r\leqslant q^{3}-1$,
				$0\leqslant q\tau-r$, then $d=n_{2}-rq-s$.}
			
			\item {If $r=0$, $s=q^{2}$, then $d=n_{2}-rq-s$.}
			
			\item {If $s=(q\tau-r)(q^{3}+q^{2}+q)+\lambda q^{2}$, $0\leqslant(q^{2}+q+1)\tau-qr-r\leqslant q^{3}-1-\lambda q^{2}$, $0\leqslant\lambda$, then $d=n_{2}-rq-s$.}
			
			\item {If $r=0$, $s=q^{2}(q^{3}-1-\lambda)$, $\lambda q^{2}\leqslant q^{3}-1$,
				then $d=n_{2}-rq-s$.}
			
			\item {If $s=(q^{3}+q^{2}+q)\left((q-1)q-(q\tau+r)\right)$, $0\leqslant(q^{2}+q+1)\tau+qr+r\leqslant q^{3}-1$,
				then $d=n_{2}-rq-s$.}
		\end{enumerate}
	\end{prop}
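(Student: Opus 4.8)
The plan is to prove each part by the method of Proposition~\ref{prop:Suppose_Bound_d}: the previous proposition already gives the Goppa lower bound $d\geqslant n_{2}-rq-s$, so it suffices to exhibit one function $z\in\mathscr{L}(rQ+sP)$ for which the codeword $ev_{D}(z)$ has weight exactly $n_{2}-rq-s$. I will arrange $z$ so that
\[
\Div(z)=-rQ-sP+F,\qquad F\geqslant 0,
\]
with $F$ a multiplicity-free sum of degree-one places, all lying in $\supp D$. Granting this, $\Div(z)+rQ+sP=F\geqslant 0$ shows $z\in\mathscr{L}(rQ+sP)$; degree balance gives $\deg F=\deg(rQ+sP)=rq+s<n_{2}$; and since every place of $\supp F$ is a degree-one point of $\supp D$ while $z$ has no other zero on $\supp D$, the codeword $ev_{D}(z)$ is zero in exactly $rq+s$ coordinates, i.e.\ has weight $n_{2}-rq-s$. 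The raw materials, whose divisors follow from Proposition~\ref{prop:valuation}, are $\Div(x^{q}y^{q+1})=(q^{3}+q^{2}+q)P-(q^{2}+q+1)Q$ (unique zero at $P$, unique pole at $Q$), $\Div((y-\beta)/y)=-q^{2}P+D_{\beta}$, and $\Div(x-\alpha)=-qQ+D_{\alpha}$ for $\alpha,\beta\in\mathbb{F}_{q^{3}}^{*}$, where $D_{\alpha}:=\sum_{\beta}D_{\alpha,\beta}$ and $D_{\beta}:=\sum_{\alpha}D_{\alpha,\beta}$. A preliminary step is to note that every such fibre has exactly $q^{2}$ points: clearing denominators in \eqref{eq:alpha_beta} makes $\alpha$ (resp.\ $\beta$) a root of a degree-$q^{2}$ polynomial, so each fibre has at most $q^{2}$ points, and since $\deg D=(q^{3}-1)q^{2}$ is spread over the $q^{3}-1$ nonzero values with this bound the average forces equality. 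Consequently $(y-\beta)/y$ and $x-\alpha$ have all their zeros among the degree-one places of $\supp D$, and these zeros are simple.

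For parts (1)--(3), which all have the shape $s=m(q^{3}+q^{2}+q)+\lambda q^{2}$ with $m=q\tau-r$, $\lambda\geqslant 0$ and $0\leqslant T\leqslant q^{3}-1-\lambda q^{2}$ where $T:=(q^{2}+q+1)\tau-qr-r$ (part~(1) being $\lambda=0$, part~(2) being $(\tau,r,\lambda)=(0,0,1)$), I would take
\[
z:=\big(x^{q}y^{q+1}\big)^{-m}\prod_{j=1}^{\lambda}\frac{y-\beta_{j}}{y}\prod_{i=1}^{T}(x-\alpha_{i}),
\]
choosing the $\beta_{j}$ distinct in $\mathbb{F}_{q^{3}}^{*}$ and the $\alpha_{i}$ distinct in $\mathbb{F}_{q^{3}}^{*}$ with no pair $(\alpha_{i},\beta_{j})$ on the curve; this is possible because each $\beta_{j}$ forbids at most $q^{2}$ values of $\alpha$, hence at most $\lambda q^{2}$ in all, and $T\leqslant q^{3}-1-\lambda q^{2}$ leaves enough room. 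Summing divisors gives $\Div(z)=-sP+(m(q^{2}+q+1)-qT)Q+\sum_{j}D_{\beta_{j}}+\sum_{i}D_{\alpha_{i}}$, and the identity $qT=r+m(q^{2}+q+1)$ (which is exactly the definition of $T$ rewritten via $(q-1)(q^{2}+q+1)=q^{3}-1$) makes the coefficient of $Q$ equal to $-r$; the fibres in the last two sums are pairwise disjoint, so $F:=\sum_{j}D_{\beta_{j}}+\sum_{i}D_{\alpha_{i}}$ has the required form.

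For part~(4), with $r=0$ and $s=q^{2}(q^{3}-1-\lambda)$, I would take $z:=\prod_{\beta\in S}(y-\beta)/y$ for any $S\subseteq\mathbb{F}_{q^{3}}^{*}$ of size $q^{3}-1-\lambda$, so that $\Div(z)=-sP+\sum_{\beta\in S}D_{\beta}$ and $F=\sum_{\beta\in S}D_{\beta}$ works since the fibres $D_{\beta}$ are disjoint of size $q^{2}$. For part~(5), I would set $m':=(q-1)q-(q\tau+r)$ and $T':=(q^{2}+q+1)\tau+qr+r$ (so $0\leqslant T'\leqslant q^{3}-1$ by hypothesis) and take
\[
z:=\big(x^{q}y^{q+1}\big)^{-m'}\prod_{i=1}^{q^{3}-1-T'}(x-\alpha_{i})
\]
with distinct $\alpha_{i}\in\mathbb{F}_{q^{3}}^{*}$; the rearrangement $r+m'(q^{2}+q+1)=q(q^{3}-1-T')$ then gives $\Div(z)=-rQ-sP+\sum_{i}D_{\alpha_{i}}$, again of the required form. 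In all cases the Goppa bound then forces $d=n_{2}-rq-s$.

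The routine parts — checking that $\lambda$, $T$ and $q^{3}-1-T'$ are nonnegative and bounded by $q^{3}-1$ (immediate from the stated inequalities; note the sign of $m$, $m'$ is irrelevant since only $T$, $T'$, $\lambda$ count factors), and verifying the two arithmetic identities for $T$ and $T'$ — I would dispatch quickly. The one genuinely delicate point is the combinatorics behind part~(3): I need both that each fibre has precisely $q^{2}$ points (so that $(y-\beta)/y$ has no zeros off $\supp D$) and that the $x$-fibres $D_{\alpha_{i}}$ and the $y$-fibres $D_{\beta_{j}}$ can be kept disjoint, and it is precisely the hypothesis $T\leqslant q^{3}-1-\lambda q^{2}$ that makes the disjointness achievable. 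This is the step I expect to be the main obstacle, and it is exactly the mechanism already used in Proposition~\ref{prop:Suppose_Bound_d}.
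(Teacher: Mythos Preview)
Your proof is correct and follows essentially the same strategy as the paper: in each case you exhibit an explicit function $z\in\mathscr{L}(rQ+sP)$ whose zeros on $\supp D$ are exactly $rq+s$ simple rational points, built from the monomial $x^{q}y^{q+1}$ and linear factors $x-\alpha$, $(y-\beta)/y$. For parts (1)--(3) your construction is literally the paper's (with the harmless regrouping $y^{-\lambda}\prod(y-\beta_{j})=\prod(y-\beta_{j})/y$). For parts (4) and (5) your constructions are noticeably cleaner than the paper's: the paper builds the complementary zero set $D-\sum D_{\beta_{i}}$ (resp.\ $D-\sum D_{\alpha_{i}}$) by first producing all of $D$ via $\prod_{\alpha}(x-\alpha)$ or $\prod_{\beta}(y-\beta)$ and then dividing out, whereas you write the desired fibres directly; the two expressions in fact differ only by a nonzero scalar, so the resulting codewords are the same. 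Your preliminary fibre-counting argument (at most $q^{2}$ points per $\beta$-fibre, then averaging against $\deg D$) is a nice addition that the paper leaves implicit.
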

	\begin{proof}
		
		\begin{enumerate}
			\item Choose $\kappa:=(q^{2}+q+1)\tau-qr-r$ distinct elements $\alpha_{1},\ldots,\alpha_{\kappa}\in\mathbb{F}_{q^{3}}^{*}$,
			and consider 
			\[
			z:=\left(x^{q}y^{q+1}\right)^{r-q\tau}\prod_{i=1}^{\kappa}(x-\alpha_{i}).
			\]
			The divisor of $z$ is
			\begin{align*}
				\Div( z) 
				& =  (r-q\tau)\left((q^{3}+q^{2}+q)P\right.\\
				& \quad \left. -(q^{2}+q+1)Q\right)\\
				& \quad -q\kappa Q+\sum_{i=1}^{\kappa}D_{\alpha_{i}}\\
				& =  \left(-q\kappa-(r-q\tau)(q^{2}+q+1)\right)Q\\
				& \quad-(q\tau-r)(q^{3}+q^{2}+q)P+\sum_{i=1}^{\kappa}D_{\alpha_{i}}\\
				& =  -r Q -(q\tau-r)(q^{3}+q^{2}+q)P
				 +\sum_{i=1}^{\kappa}D_{\alpha_{i}}.
			\end{align*}
			
			\item Fix an element $ \beta \in \mathbb{F}_{q^3}^* $. We consider 
			\[
			z:=y^{-1}(y-\beta).
			\]
			Recall that the divisor of $y-\beta$ is 
			\[ \Div( y-\beta)=-Q+D_{\beta}-qV. \]
			 Then 
			\begin{align*}
				\Div( z) & =  -Q+D_{\beta}-qV+q V-q ^{2}P+Q\\
				& =  D_{\beta}-q^{2}P.
			\end{align*}
			
			\item Let $ \kappa : = (q^{2}+q+1)\tau-qr-r $ with $0\leqslant \kappa \leqslant q^{3}-1-\lambda q^{2}$. Choose $ \lambda  $ distinct elements $ \beta _1, \ldots,\beta_\lambda $ in $ \mathbb{F}_{q^3} ^*$, and $\kappa$ distinct elements $\alpha_1, \ldots, \alpha_\kappa $ in the set  $A:=\{\alpha\in \mathbb{F}_{q^3} ^* |(\alpha,\beta) \not \in D_{\beta_i} , i=1,\ldots, \lambda \}$. 
			Consider 
			\[
			z:=y^{-\lambda}\prod_{i=1}^{\lambda}(y-\beta_{i})(x^{q+1}y^{q})^{r-q\tau}\prod_{i=1}^{\kappa}(x-\alpha_{i}).
			\]
			Then the divisor of $z$ is
			\begin{align*}
				  \Div( z) &=  -\lambda q^{2}P+\sum_{i=1}^{\lambda} D_{\beta_{i}}-q\kappa Q+\sum_{i=1}^{\kappa}D_{\alpha_{i}}\\
				&  \quad +(r-q\tau)\left((q^{3}+q^{2}+q)P-(q^{2}+q+1)Q\right)\\
				& =  -rQ-\left((q\tau-r)(q^{3}+q^{2}+q)+\lambda q^{2}\right)P\\
				& \quad +\sum_{i=1}^{\kappa}D_{\alpha_{i}} +\sum_{i=1}^{\lambda}D_{\beta_{i}}.
			\end{align*}
		
			\item Suppose that $ \mathbb{F}_{q^3}^*=\{\alpha_1,\ldots, \alpha_{q^3-1}\} $, and $ \beta_1, \ldots, \beta_ {\lambda}$ are distinct elements in $ \mathbb{F}_{q^3}^* $. Consider 
			\[
			z:=y^{\lambda}\prod_{i=1}^{\lambda}(y-\beta_{i})^{-1}(x^{q}y^{q+1})^{q-q^{2}}\prod_{i=1}^{q^{3}-1}(x-\alpha_{i}),
			\]
			where $\lambda q^{2}\leqslant q^{3}-1$. Then its divisor is 
			\begin{align*}
			 	\Div( z)
			 & =  -\sum_{i=1}^{\lambda } D_{\beta_{i}}+\lambda q^{2}P+q(q^{3}-1)Q\\
				& \quad -q^{2}(q^{3}-1)P -q(q^{3}-1)Q+D\\
				& =  D-\sum_{i=1}^{\lambda } D_{\beta_{i}}-q^{2}(q^{3}-1-\lambda)P.
			\end{align*}
			
			\item Let $\epsilon=(q^{2}+q+1)\tau+qr+r$. Suppose that $ \mathbb{F}_{q^3}^*=\{\beta_1,\ldots, \beta_{q^3-1}\} $, and $ \alpha_1, \ldots, \alpha_ {\epsilon}$ are distinct elements in $ \mathbb{F}_{q^3}^* $. Consider
			\[
			z:=y^{q^{3}-1}\prod_{i=1}^{q^{3}-1}(y-\beta_{i})(x^{q}y^{q+1})^{q\tau+r}\prod_{i=1}^{\epsilon}(x-\alpha_{i})^{-1}.
			\]
			Its divisor is 
			\begin{align*}				  
	\Div( z)	&  =  \sum_{i=1}^{q^3-1 } D_{\beta_{i}}-(q^{3}-1)q^{2}P\\
	            & \quad -(q^{2}+q+1)(q\tau+r)Q\\
				& \quad +(q^{2}+q+1)q(q\tau+r)P    +q\epsilon Q-\sum_{i=1}^{\epsilon} D_{\alpha_i}\\
				& =  D-\sum_{i=1}^{\epsilon} D_{\alpha_i}-rQ\\
				& \quad -(q^{2}+q+1)q\left((q-1)q-(q\tau+r)\right)P.
			\end{align*}
			
		\end{enumerate}
	\end{proof}
	 For a fixed $r$, we define the Weierstrass set  
	 \[
	  H_{r}=\{s\in \mathbb {Z}|\mathscr {L}(rQ+sP)\not=\mathscr {L}(rQ+(s-1)P)\}. 
	  \] 
	 Proposition \ref{prop: basis} yields
	 \[
	  H_{r}=\{s\in \mathbb {Z}|\Omega_{r,s}\not=\Omega_{r,s-1}\}.
	  \] 
	  	 Assume that $ f \in \mathscr {L}(rQ+sP)\backslash \mathscr {L}(rQ+(s-1)P) $,  and $ g \in\mathscr {L}(s'P)\backslash \mathscr {L}((s'-1)P)   $. Then, 
	  	  \[ f g \in \mathscr {L}\left(rQ+(s+s')P\right)\backslash \mathscr {L}\left(rQ+(s+s'-1)P\right). \]
	 This implies 
	  \[ H_0 +H_r \subseteq H_r.\]
	 Define the set 
	\[ H_{r}^{*}=\{s\in \mathbb {Z}|C_{r,s}\not=C_{r,s-1}\} .\] 
	So we can restrict to consider the codes $  C_{r,s}  $ with $ s \in  H_{r}^{*} $. It is easy to see that $H^*_r$ consists of $ n_2 $ elements. Let us write $ H^*_r = \{ s_1^{*}< s_2^{*}< \ldots < s_{n_2}^{*} \} $. Then $ \dim(C_{r,s_i^{*}})=i $.
	Clearly, $H_{r}^{*}\subseteq H_{r}$ and $H_{r}^{*}\cap{\left\{ s|rq+s<n_{2}\right\} }=H_{r}\cap{\left\{ s|rq+s<n_{2}\right\} }$.	
	For $  s \in \mathbb {Z} $ satisfying $rq+s\geqslant n_{2}$, then s $\in H_{r}^{*}$ if and only if $s\in H_{r}^{\bot}$, 
	where $H_{r}^{\bot}$ is defined by  
	\begin{equation}\label{eq:H_r_bot}
	 H_{r}^{\bot}:=\begin{cases}
	q^{5}+q^{4}-q^{3}-q^{2}-2q+1-H_{q^{2}-1-r} \\
	\qquad \qquad  \text{for \ensuremath{0\leqslant r\leqslant q^{2}-1}},\\
	q^{5}+q^{4}-2q^{3}-2q^{2}-3q+1-H_{2q^{2}+q-r} \\
	\qquad \qquad \text{for \ensuremath{q^{2}\leqslant r\leqslant q^{2}+q}}.
	\end{cases}
	\end{equation} 
	We remark that besides the Goppa bound there are several bounds available to estimate the minimum distance of a code. One of the most interesting is the order bound. We can follow the version of \cite{Beelen}, which is briefly explained below.
	For $i=1 , \ldots, n_2$, let 
	\[ \Lambda_i^r: =\{(a,b)|a \in H_0,b \in H_r,a+b=s_i^* \in H_r^{*}\} .\]
	By using the notation in \cite{Beelen}, we consider the infinite sequence   	$S=P, P, P,\ldots$, then the minimum distance of the dual code of $ C_{r, s} $ verifies
\[ 	d(C_{r,s}^{\bot} )\geqslant  d_S(rQ+sP):=\min_{s_i^*> s} \{\#\Lambda_i^{r} \}. \]

	\begin{example}
		Firstly, we consider the  code $C_{5,s}$ over $ \mathbb{F}_8 $ of length  $n_{2}=28$. The genus of curve $\mathcal X$ is  $g=6$. We find that the dual code of $C_{5,s}$ is equivalent to $ C_{5,18-s}$ by Proposition \ref{prop:equivalent_rst}. According to  Proposition \ref{prop:basis_rst} and Equation (\ref{eq:H_r_bot}), we get  
	\[ 	H_5 =\{ -6, -5, -2, -1, 0, 1 ,2 , 3 ,\ldots \}, \]
		and
		\begin{align*}
		H_5^{*} =&\{ -6, -5, -2, -1, 0, 1 ,2 , 3 ,\\
		     &  \ldots, 16, 17, 18, 19, 20,21 ,24 , 25  \}.
		\end{align*}
	By direct computation, we obtain the following table.
	\begin{table}[H]
		\protect\caption{}
		\label{t:table}
		\renewcommand{\arraystretch}{1.3}
		\begin{tabular}{|c|c|c|c|}
			\hline 
			$s$ & $\dim$ & basis & $d$\tabularnewline
			\hline 
			\hline 
			-6 & 1 & $x^{2}y$ & 28\tabularnewline
			\hline 
			-5 & 2 & $xy$ & 24\tabularnewline
			\hline 
			-2 & 3 & $x^{2}$ & 24\tabularnewline
			\hline 
			-1 & 4 & $x$ & 20\tabularnewline
			\hline 
			0 & 5 & $1$ & 18\tabularnewline
			\hline 
			1 & 6 & $x^{3}y^{-1}$ & 18\tabularnewline
			\hline 
			2 & 7 & $x^{2}y^{-1}$ & 16\tabularnewline
			\hline 
			3 & 8 & $xy^{-1}$ & 16\tabularnewline
			\hline 
			4 & 9 & $y^{-1}$ & 15\tabularnewline
			\hline 
			5 & 10 & $x^{3}y^{-2}$ & 13\tabularnewline
			\hline 
			6 & 11 & $x^{2}y^{-2}$ & 12\tabularnewline
			\hline 
			7 & 12 & $xy^{-2}$ & 12\tabularnewline
			\hline 
			8 & 13 & $y^{-2}$ & 11\tabularnewline
			\hline 
			9 & 14 & $x^{-1}y^{-2}$ & 10\tabularnewline
			\hline 
		\end{tabular}
		\begin{tabular}{|c|c|c|c|}
			\hline 
			$s$ & $\dim$ & basis & $d$\tabularnewline
			\hline 
			\hline 
			10 & 15 & $x^{2}y^{-3}$ & 8\tabularnewline
			\hline 
			11 & 16 & $xy^{-3}$ & 8\tabularnewline
			\hline 
			12 & 17 & $y^{-3}$ & 8\tabularnewline
			\hline 
			13 & 18 & $x^{-1}y^{-3}$ & 7\tabularnewline
			\hline 
			14 & 19 & $x^{-2}y^{-3}$ & 4\tabularnewline
			\hline 
			15 & 20 & $xy^{-4}$ & 4\tabularnewline
			\hline 
			16 & 21 & $y^{-4}$ & 4\tabularnewline
			\hline 
			17 & 22 & $x^{-1}y^{-4}$ & 4\tabularnewline
			\hline 
			18 & 23 & $x^{-2}y^{-4}$ & 3\tabularnewline
			\hline 
			19 & 24 & $xy^{-5}$ & 3\tabularnewline
			\hline 
			20 & 25 & $y^{-5}$ & 3\tabularnewline
			\hline 
			21 & 26 & $x^{-1}y^{-5}$ & 2\tabularnewline
			\hline 
			24 & 27 & $y^{-6}$ & 2\tabularnewline
			\hline 
			25 & 28 & $x^{-1}y^{-6}$ & 1\tabularnewline
			\hline 
		\end{tabular}
	\end{table}		
		Comparing Table \ref{t:table} with the reference \cite{Grassl}, we find the following codes
		over $\mathbb{F}_{8}$ with the best known parameters:
		
		$ {[}28,1,28{]} $, ${[}28,2,24{]} $, $ {[}28,3,24{]} $, $ {[}28,8,16{]} $, $ {[}28,12,12{]} $,  ${[}28,17,8{]} $, $ {[}28,25,3{]} $, $ {[}28,26,2{]} $, $ {[}28,27,2{]} $, $ {[}28,28,1{]} $. 
		
		Table \ref{t:table} enables one to construct the codes explicitly. For instance, a $ {[}28,8,16{]} $-code is constructed by the basis $x^{2}y$, $xy$, $x^{2}$, $x$, $1$, $x^{3}y^{-1}$, $x^{2}y^{-1}$, $xy^{-1}$.
	\end{example}
	
	\begin{example}
		Fix $r=0$, we consider the one-point code $C_{0,s}$ over $\mathbb{F}_{8}$ which is dual to $C_{3,32-s}$ up to equivalence. Counting the lattice points in both sets $ \Omega_{0,s} $ and $ \Omega_{3,s} $, we find the Weierstrass sets
		\[  H_{0} = \{0 ,4, 7, 8, 9, 11, 12, 13, 14,  \ldots \} ,\]
		and
		\[  H_{3} = \{-5 ,-1, 0, 2, 3, 4, 6, 7, 8,  \ldots \} . \]
		Using Equation (\ref{eq:H_r_bot}), the set $  H_{0}^{*}  $ is
		\begin{align*}
			H_{0}^{*} =& \{ 0 ,4, 7, 8, 9, 11, 12, 13, 14, \\
			 & \ldots, 26,27, 29,  30, 31, 33, 34, 38\} .
		\end{align*} 
		\begin{table}[H]
			\protect\caption{}	
			\renewcommand{\arraystretch}{1.3}		
			\begin{tabular}{|c|c|c|c|}
				\hline 
				$s$ & $\dim$ & basis & $d$\tabularnewline
				\hline 
				\hline 
				0 & 1 & 1 & 28\tabularnewline
				\hline 
				4 & 2 & $y^{-1}$ & 24\tabularnewline
				\hline 
				7 & 3 & $xy^{-2}$ & 21\tabularnewline
				\hline 
				8 & 4 & $y^{-2}$ & 20\tabularnewline
				\hline 
				9 & 5 & $x^{-1}y^{-2}$ & 19\tabularnewline
				\hline 
				11 & 6 & $xy^{-3}$ & 18\tabularnewline
				\hline 
				12 & 7 & $y^{-3}$ & 16\tabularnewline
				\hline 
				13 & 8 & $x^{-1}y^{-3}$ & 15\tabularnewline
				\hline 
				14 & 9 & $x^{-2}y^{-3}$ & 14\tabularnewline
				\hline 
				15 & 10 & $xy^{-4}$ & 13\tabularnewline
				\hline 
				16 & 11 & $y^{-4}$ & 12\tabularnewline
				\hline 
				17 & 12 & $x^{-1}y^{-4}$ & 12\tabularnewline
				\hline 
				18 & 13 & $x^{-2}y^{-4}$ & 11\tabularnewline
				\hline 
				19 & 14 & $xy^{-5}$ & 9\tabularnewline
				\hline 
			\end{tabular}
			\begin{tabular}{|c|c|c|c|}
				\hline 
				$s$ & $\dim$ & basis & $d$\tabularnewline
				\hline 		
				\hline 
				20 & 15 & $y^{-5}$ & 8\tabularnewline
				\hline 
				21 & 16 & $x^{-1}y^{-5}$ & 7\tabularnewline
				\hline 
				22 & 17 & $x^{-2}y^{-5}$ & 7\tabularnewline
				\hline 
				23 & 18 & $x^{-3}y^{-5}$ & 6\tabularnewline
				\hline 
				24 & 19 & $y^{-6}$ & 4\tabularnewline
				\hline 
				25 & 20 & $x^{-1}y^{-6}$ & 4\tabularnewline
				\hline 
				26 & 21 & $x^{-2}y^{-6}$ & 4\tabularnewline
				\hline 
				27 & 22 & $x^{-3}y^{-6}$ & 4\tabularnewline
				\hline 
				29 & 23 & $x^{-1}y^{-7}$ & 4\tabularnewline
				\hline 
				30 & 24 & $x^{-2}y^{-7}$ & 3\tabularnewline
				\hline 
				31 & 25 & $x^{-3}y^{-7}$ & 3\tabularnewline
				\hline 
				33 & 26 & $x^{-1}y^{-8}$ & 2\tabularnewline
				\hline 
				34 & 27 & $x^{-2}y^{-8}$ & 2\tabularnewline
				\hline 
				38 & 28 & $x^{-2}y^{-9}$ & 1\tabularnewline
				\hline 
			\end{tabular}
		\end{table}	
		 We find a $ {[}28,12,12{]} $-code
		 over $\mathbb{F}_{8}$ with the best known parameters  as in the previous example.	
	
	\end{example}
	
	\begin{example}
		Let us consider the code $C_{4,s}$ over $\mathbb{F}_{27}$, with $g=37$, $n_{2}=234$. The code $C_{4,s}$
		is dual to $C_{4,282-s}$ up to equivalence. The Weierstrass set for $r=4$ is  $H_4=\{-10$, $-1$, $0$, $ 8$, $ 9$, $16$, $17$, $18$, $19$, $25$, $26$, $27$, $28$, $29$, $34$, $35$, $36$, $37$, $38$, $39$, $42$, $43$, $44$, $45$, $46$, $47$, $48$, $51$, $52$, $53$, $54$, $55$, $56$, $57$, $58$, $60$, $61$, $62$, $63$, $64$, $65$, $66$, $67$, $68$, $69$, $70$, $71$, $72$, $\ldots \}$. 
		
		According to Equation (\ref{eq:H_r_bot}), we obtain
	$H_4=\{-10$, $-1$, $0$, $ 8$, $ 9$, $16$, $17$, $18$, $19$, $25$, $26$, $27$, $28$, $29$, $34$, $35$, $36$, $37$, $38$, $39$, $42$, $43$, $44$, $45$, $46$, $47$, $48$, $51$, $52$, $53$, $54$, $55$, $56$, $57$, $58$, $60$, $61$, $62$, $63$, $64$, $65$, $66$, $67$, $68$, $69$, $70$, $71$, $72$, $\ldots$ , $214$, $215$, $216$, $217$, $218$, $219$, $220$, $221$, $222$, $223$, $225$, $226$, $227$, $228$, $229$, $230$, $231$, $232$, $235$, $236$, $237$, $238$, $239$, $240$, $241$, $244$, $245$, $246$, $247$, $248$, $249$, $254$, $255$, $256$, $257$, $258$, $264$, $265$, $266$, $267$, $274$, $275$, $283$, $284$, $293 \}$. Computing the order bound, we get $  d(C_{4,165})\geqslant 59 $. So we obtain  a new record-giving $[234,141,\geqslant 59]$-code
 over $\mathbb{F}_{27}$ (according
	to the tables \cite{MinT}). 
	Similarly, for general $r$, we can find  the record-giving codes  over $\mathbb{F}_{27}$  as follows: $[234,143,\geqslant 57]$, $[234,144,\geqslant 56]$, $[234,145,\geqslant 55]$.

		\end{example}


%





\ifCLASSOPTIONcaptionsoff
  \newpage
\fi



%
\bibliographystyle{IEEEtran}
\bibliography{paper}

%






\end{document}